\newtheorem{thm}{Theorem}
\newtheorem{theorem}{\textbf{Theorem}}[section]
 \newtheorem{prop}{Proposition}[section]
 \newtheorem{lemma}[thm]{\textbf{Lemma}}
 \newtheorem{definition}[prop]{Definition}
\renewcommand{\tilde}{\widetilde}
\renewcommand{\hat}{\widehat}
\newcommand{\bref}[1]{\textbf{\ref{#1}}}
\newcommand{\p}[1]{|#1|}
\newcommand{\gh}[1]{\mathrm{gh}(#1)}
\newcommand{\dv}{\mathrm{d_v}}
\newcommand{\dx}{\mathrm{d_X}}
\renewcommand{\d}{\partial}
\renewcommand{\dh}{\mathrm{d_h}}
\newcommand{\cM}{\mathcal{M}}
\renewcommand{\geq}{\,{\geqslant}\,}
\newcommand{\binner}[2]{%
  {\langle}\kern-4.15pt{\langle}#1{,}\,#2{\rangle}\kern-4.15pt{\rangle}}
\newcommand{\commut}[2]{[#1{,}\,#2]}
\newcommand{\pb}[2]{\left\{{}#1{},{}#2{}\right\}}
\newcommand{\half}{\mathchoice{%
    \ffrac{1}{2}}{\frac{1}{2}}{\frac{1}{2}}{\frac{1}{2}}}
\newcommand{\ffrac}[2]{\raisebox{.5pt}%
  {\footnotesize$\displaystyle\frac{#1}{#2}$}\kern1pt}
\newcommand{\dl}[1]{\mathchoice{\ffrac{\d}{\d #1}}{\frac{\d}{\d #1}}{\ffrac{\d}{\d #1}}{\ffrac{\d}{\d #1}}}
\newcommand{\st}[2]{{\overset{#1}{#2}}}
\newcommand{\ddl}[2]{\ffrac{\d #1}{\d #2}}
\newcommand{\Liealg}{\mathfrak} 
\newcommand{\algg}{\Liealg{g}}
\newcommand{\algA}{\mathcal{A}}
\newcommand{\CC}{\mathcal{C}}
\newcommand{\cC}{\mathcal{C}}
\newcommand{\fR}{\mathbb{R}}
\newcommand{\fZ}{\mathbb{Z}}
\newcommand{\bA}{\mathbf{A}}
\newcommand{\bF}{\mathbf{F}}
 \def\cE{\mathcal{E}}
 \def\cG{\mathcal{G}}
 \def\cI{\mathcal{I}}
\def\cK{\mathcal{K}}
 \def\cL{\mathcal{L}}
\def\tr{{\rm Tr}}
\newcommand{\symp}{\omega}
\newcommand{\sect}{\sigma}
\newcommand{\dmsn}{n}
\newcommand{\gtj}{\tilde}
\newcommand{\tgamma}{q}%\Tilde{\gamma}}
\numberwithin{equation}{section} %\makeatletter
\newcommand{\rchanged}[1]{{#1}}
\newcommand\blfootnote[1]{%
  \begingroup
  \renewcommand\thefootnote{}\footnote{#1}%
  \addtocounter{footnote}{-1}%
  \endgroup
}
\title{Presymplectic minimal models of local gauge theories}
\author[1,2]{Ivan Dneprov }
\author[3,$*$,$\dagger$,$\ddagger$]{~~~~Maxim Grigoriev }
\author[1]{Vyacheslav~~Gritzaenko}
\affil[1]{\textsl{ Lebedev Physical Institute, %\protect\\
  Leninsky ave. 53, 119991 Moscow, Russia \vspace{5pt}}} 
\affil[2]{\textsl{ Institute for Theoretical and Mathematical Physics,\protect\\
  Lomonosov Moscow State University, 119991 Moscow, Russia  \vspace{5pt}}}
\affil[3]{\textsl{ Service de Physique de l'Univers, Champs et Gravitation, \protect\\ Universit\'e de Mons, 20 place du Parc, 7000 Mons, 
Belgium \vspace{-20pt}}}
\date{}    
\begin{document}

\maketitle

\begin{abstract}
We elaborate on the recently proposed notion of a weak presymplectic gauge PDE. It is a $\mathbb{Z}$-graded bundle over the space-time manifold, equipped with a degree $1$ vector field and a compatible graded presymplectic structure. This geometrical data naturally defines a Lagrangian gauge field theory. Moreover, it encodes not only the Lagrangian of the theory but also its full-scale Batalin-Vilkovisky (BV) formulation. In particular, the respective field-antifield space arises as a symplectic quotient of the super-jet bundle of the initial fiber bundle. A remarkable property of this approach is that among the variety of presymplectic gauge PDEs encoding a given gauge theory we can pick a minimal one that usually turns out to be finite-dimensional, and unique in a certain sense. The approach can be considered as an extension of the familiar AKSZ construction to not necessarily topological and diffeomorphism-invariant theories. We present a variety of examples including $p$-forms, chiral Yang-Mills theory, Holst gravity, and conformal gravity. We also explain the explicit relation to the non-BV-BRST version of the formalism, which happens to be closely related to the covariant phase space and the multisymplectic approaches.
\end{abstract}
\begin{flushright}
\textit{In memory of Igor Anatolievich Batalin}
\end{flushright}
\phantom{a}
\blfootnote{${}^{*}$ {Corresponding Author: grigoriev.max@gmail.com}}
\blfootnote{${}^{\dagger}$ Supported by the ULYSSE Incentive
Grant for Mobility in Scientific Research [MISU] F.6003.24, F.R.S.-FNRS, Belgium.}
\blfootnote{${}^{\ddagger}$ On leave of absence from Lebedev Physical Institute and Institute for Theoretical and Mathematical Physics, Lomonosov MSU.}
\tableofcontents

\section{Introduction}

The Batalin-Vilkovisky (BV)~\cite{Batalin:1981jr,Batalin:1983wj} formalism is now considered to be the most fundamental mathematical setup for studying gauge field theories. Besides its original use as a quantization tool, it turned out to be extremely fruitful in studying symmetries, consistent interactions, and renormalization, see e.g.~\cite{Voronov:1982ur,Barnich:1993vg,Barnich:1994ve,Piguet:1995er,Barnich:2000zw,costello2011renormalization}, as well as in constructing new models as BV systems from the very start. The well-known examples of the latter are the String Field Theory~\cite{Thorn:1986qj,Bochicchio:1986zj,Zwiebach:1993ie} and topological field theories.  In the case of topological theories, it is the so-called AKSZ construction~\cite{Alexandrov:1995kv} which encodes the BV formulation of a topological model in terms of the space-time manifold and the finite-dimensional super-geometrical  object -- $QP$-manifold, i.e. a graded manifold equipped with a homological vector field and a compatible graded symplectic structure.

The AKSZ construction has been successfully employed in describing various topological models, see e.g.~\cite{Cattaneo:1999fm,Batalin:2001fc,Cattaneo:2001ys,Roytenberg:2002nu,Bonechi:2009kx,Barnich:2009jy,Bonavolonta:2013mza,Ikeda:2012pv}. Moreover, it has been shown~\cite{Grigoriev:1999qz} to relate the Lagrangian BV and the Hamiltonian BFV~\cite{Fradkin:1975cq,Batalin:1977pb,Fradkin:1977xi} formulations of constrained Hamiltonian systems. In this sense, the AKSZ construction comprises both the BV and BFV formulations of the underlying gauge system~\cite{Barnich:2003wj,Grigoriev:2010ic,Grigoriev:2012xg} (see also~\cite{Cattaneo:2012qu,Cattaneo:2015vsa} for a related approach to unification of BV and BFV).

The AKSZ construction, as nice as it is, does not directly apply to non-topological field theories\footnote{Note that reparameterization invariant mechanical systems are of AKSZ type~\cite{Grigoriev:1999qz,Barnich:2003wj}.} which are of the main interest from the physics perspective. Of course, this is the case only if one insists on manifest locality and keeps the target $QP$-manifold finite-dimensional. 

A possible way out is to allow for infinite-dimensional target space. If we limit ourselves to the non-Lagrangian (equations of motion)  version~\cite{Barnich:2006hbb} of the AKSZ construction, the relevant generalization is known~\cite{Barnich:2010sw} (see also~\cite{Barnich:2004cr,Grigoriev:2006tt} for the simplified versions and \cite{Grigoriev:2019ojp} for a modern geometrical description) and amounts to taking as target space  the jet-bundle of the BV formulation of the theory in question seen as a $Q$-manifold with $Q=s+\dh$, i.e. the total BRST differential~\cite{Barnich:2010sw,Grigoriev:2012xg}. This naturally leads to a concept of a gauge PDE (gPDE for short)~\cite{Grigoriev:2019ojp}) which encodes a local gauge theory in the geometric data of a $Q$-bundle~\cite{Kotov:2007nr}, i.e. an AKSZ-like fiber bundle equipped with a compatible homological vector field. Let us also mention a somewhat related unfolded formalism of higher spin gauge theories~\cite{Vasiliev:1988xc,Vasiliev:2005zu}.

If one is after a full-scale AKSZ formulation of a given Lagrangian gauge theory, the situation is more subtle. The required  modification is much less trivial and was put forward in~\cite{Grigoriev:2010ic,Grigoriev:2012xg}. The crucial ingredient of the construction is the descent-completion $\cL$ of the Lagrangian $\cL_n$ to a cocycle of the total BRST differential $s_0+\dh$ by adding a lower-degree horizontal forms $\cL_{n-1},\cL_{n-2},\ldots$.

An interesting alternative, initially proposed in~\cite{Alkalaev:2013hta,Grigoriev:2016wmk}, is to consider instead of a symplectic structure a possibly degenerate presymplectic one. It turns out that in this case an AKSZ-like model with the finite-dimensional target-space can describe non-topological systems. \rchanged{This phenomenon}
can be understood as follows: given a Lagrangian BV system one can for the moment forget about the Lagrangian and concentrate on the equations of motion and gauge symmetries in order to equivalently reformulate it as a gauge PDE.  Furthermore, among the equivalent gauge PDE formulations of the system one can usually identify the minimal one. This can be obtained by eliminating the maximal amount of contractible pairs for the total BRST differential. In the next step one finds that the Lagrangian and the BV symplectic structure can be encoded in a compatible graded presymplectic structure defined on the gauge PDE. This presymplectic structure is (an equivalent of) the descent-completion of the BV-symplectic structure to a cocycle of the total BRST differential $s+\dh$. The crucial observation is that  the AKSZ-like action determined by this data gives an equivalent Lagrangian formulation of the system and moreover involves only a finite number of fields because the presymplectic structure is a local form.  Furthermore, it turns out that not only the Lagrangian but also a full-scale BV formulation can be reconstructed from this data by taking a symplectic quotient of the space of AKSZ fields~\cite{Grigoriev:2020xec,Dneprov:2022jyn} (see also~\cite{Grigoriev:2016wmk}). 

An additional observation is that all the coordinates along the kernel of the presymplectic structure are passive in the construction, e.g. the Lagrangian does not depend on them. This allows one to take the quotient with respect to the corresponding  subdistribution of the kernel distribution, resulting in a finite-dimensional bundle equipped with a compatible presymplectic structure and degree $1$ vector field $Q$. The price to be paid is that $Q$ is in general not nilpotent but satisfies a presymplectic analog of the BV master equation, which, eventually, ensures that the vector field induced on the symplectic quotient is again nilpotent and the full-scale BV formulation is recovered therein.  This leads to the concept of weak presymplectic gPDEs~\cite{Grigoriev:2022zlq}. 

In this work we perform a systematic study of weak presymplectic gPDEs. More specifically, we propose a more useful set of the basic axioms and give a detailed proof that, at least locally, a weak gPDE defines a complete BV formulation. Because weak gPDEs arise as certain quotients  of presymplectic gPDEs, of a particular interest are weak presymplectic gPDEs associated to minimal gPDEs. These are gPDEs where the maximal amount of contractible pairs have been eliminated. In the formal case, a minimal gPDE corresponds to the minimal models of the associated  $L_\infty$ (recall that formal $Q$-manifolds are $1:1$ with $L_\infty$ algebras) and hence should be unique, under some technical conditions. It follows, that weak gPDEs corresponding to minimal models are equally distinguished provided the entire regular part of the kernel distribution was factored out. When applied to the standard examples of topological models, such as the Chern-Simons theory, the procedure described above results in the usual AKSZ formulations. In this sense,weak presymplectic gPDEs can be thought of as a generaliztaions of AKSZ models to the case of not necessarily topological and diffeomorphism-invariant local gauge theories.

In addition to general statements we present a variety of examples of weak presymplectic gPDEs that describe some known local gauge theories. These include chiral YM theory~\cite{Chalmers:1996rq}, conformal gravity, Friedman-Townsend theory, Holst gravity~\cite{Holst:1995pc}, and Plebansky gravity~\cite{Plebanski:1977zz}. We also discuss $p$-form gauge theory as an illustration of how our machinery encodes a rather extended BV field-antifield space of the model into a very concise super-geometrical object.

The paper is organized as follows: Section~\bref{sec:prelim} contains a recap of graded geometrical approaches to BV and related constructions. In Section~\bref{sec:main} we introduce weak presymplectic gPDEs and prove that they define a BV theory. Section\bref{sec:examples} consists of a variety of examples of weak presymplectic gPDEs.

\section{Preliminaries}
\label{sec:prelim}

\subsection{Local BV systems}

BV formulation of local gauge theories is usually defined in terms of underlying graded fiber bundles and their associated jet-bundles. Here we give a very brief exposition with an accent on geometrical structures. The standard exposition can be found in e.g.~\cite{Barnich:2000zw}. 

Let $\cE$ be a graded fiber bundle over a real spacetime manifold $X$. Its sections (strictly speaking supersections ) are to be identified as BV fields and antifields with the degree as the ghost degree.
\begin{definition}
\label{def:local-BV}
A local BV  system  with the underlying fiber bundle $\cE\to X$,  \rchanged{$\dim(X)=\dmsn$},
is determined by the following data:\\
(i) a degree-1, evolutionary vector field $s$ defined on $J^\infty(\cE)$ and satisfying $s^2=0$\\
(ii) an $(n,2)$-form $\st{n}{\omega}\in \bigwedge^{(n,2)}(J^\infty(\cE))$ of ghost degree $-1$, which is a pullback to $J^\infty(\cE)$ of a closed $n+2$ form $\omega^\cE$ on $\cE$, 
such that
\footnote{\rchanged{Here and in what follows $L_W \equiv i_W d +(-1)^{\p{W}} di_W$ denotes the Lie derivative along the vector field $W$.}}
\begin{equation}
\label{eq:s-inv}
L_s\st{n}{\omega}+\dh (\ldots)=0\,, %\st{n-1}{\omega}=0\,,
\end{equation}
In addition, $\omega^\cE$ is required not to have zero-vectors.  
\end{definition}
Note that in the adapted local coordinates $x^a,\psi^A$, where $x^a$ are base-space coordinates, $\omega^\cE$ satisfying the above conditions can be represented as
\begin{equation}
\label{cEform}
    \omega^\cE = (dx)^n \omega_{AB}(\psi,x)d\psi^A d\psi^B\,,\qquad (dx)^n=\frac{1}{n!}\epsilon_{a_1\ldots a_n} dx^{a_1}\ldots dx^{a_n}
\end{equation}
with $\omega_{AB}$ invertible. In this form it is clear why it defines a graded Poisson bracket.

Let us recall how this data is related to a "more conventional" definition of BV in terms of master action and antibracket.  To simplify the discussion we assume that the de Rham cohomology of $X$ is empty in positive degree. Using $L_s \st{n}{\omega} = (i_s \dv - \dv i_s)\st{n}{\omega} = -\dv i_s\st{n}{\omega}$ and equation \eqref{eq:s-inv} one has $\dv i_s\st{n}{\omega}=\dh(\ldots)$. Using that $\dv$ mod $\dh$ is empty in this degree, at least locally, one finds that there exists $\st{n}{\cL}$ such that 
\begin{equation} \label{i_s_form}
    i_s\st{n}{\omega} + \dv \st{n}{\cL} = \dh(\ldots) \,. %-d_h\st{n-1}{\chi}\,.
\end{equation}
In other words $s$ is a Hamiltonian vector field and $\st{n}{\cL}$ is its Hamiltonian. The horizontal $n$-form $\st{n}{\cL}$ is usually called the BV Lagrangian.

Analogous considerations apply to any Hamiltonians  vector field, i.e. satisfying $L_V \st{n}{\omega}=\dh(\ldots)$, so that the analog of \eqref{i_s_form} defines the associated Hamiltonians $H_V \in \bigwedge^{(n,0)}(J^\infty(\cE))$. This defines $H_V$ only modulo the $\dh$-exact contributions.  The odd Poisson bracket of a pair of such Hamiltonians can be defined as
\begin{equation}
   \pb{H_{V_1}}{H_{V_2}}=i_{V_1}i_{V_2}\st{n}{\omega} \,,
\end{equation}
and is well-defined on equivalence classes, i.e. on $H^{(n,0)}(\dh)$. One can check that the above bracket makes $H^{(n,0)}(J^\infty(\cE))$ a graded Lie algebra. Applying these considerations to $\st{n}{\cL}$ one finds, 
\begin{equation}
\pb{\st{n}{\cL}}{\st{n}{\cL}}=i_si_s \st{n}{\omega}=\dh(\cdot)\,.
\end{equation}
i.e. a classical BV master equations written in terms of BV Lagrangian density.

Alternatively, restricting ourselves to objects of compact support or assuming a suitable asymptotic/boundary conditions one can extend this bracket to genuine functionals of fields. In this form it can be identified with the standard BV antibracket. Of course these considerations are completely standard and apply to generic symplectic/Poisson structures, not necessarily BV ones. For more details see e.g.~\cite{Dickey:1991xa,Barnich:1997ed,Sharapov:2016sgx}.

The BV action functional can be written as:
\begin{equation}
S_{BV}(\hat\sigma)=\int_X \hat\sigma_{pr}^*(\st{n}{\cL})\,, 
\end{equation}
where $\hat\sigma$ is a supersection of $\cE$ and $\hat\sigma_{pr}$ denotes its infinite prolongation. As usual, the underlying classical action is obtained by setting fields of non-vanishing degree to zero, or, more geometrically, restricting $S_{BV}$ to sections. 
%Here and in what follows we assume for simplicity that there are no physical fermions, i.e. that the Grassmann degree is the ghost degree modulo $2$. 

The field-theoretical interpretation of a local BV system can be given entirely in terms of the geometry of $\cE$ without resorting to the BV action. Indeed, a section $\sigma : X \to \cE$ is a solution if its prolongation $\sigma_{pr}$ satisfies $\sigma_{pr}^* \circ s=0$, i.e. that $\sigma_{pr}$ belongs to the zero locus of $s$. Gauge transformations are defined in terms of a vertical evolutionary vector field $\xi$ of ghost degree $-1$, namely the infinitesimal gauge transformation associated to $\xi$ is an evolutionary vector field $[s,\xi]$.
\begin{equation}
    \delta \sigma^*_{pr} = \sigma^*_{pr} \circ [s,\xi] \,.
\end{equation}
In a similar way one defines the (higher order) gauge for gauge transformations.

\subsection{Q-manifolds and gauge PDEs}

Throughout the paper we extensively employ the language of graded geometry, $Q$-manifolds (also known as dg-manifolds), and their associated fiber bundles. By definition, a $Q$-manifold is a pair $(\cM,Q)$, where \rchanged{$\cM$}
is a $\fZ$-graded supermanifold and $Q$ an odd vector field of ghost degree 1
satisfying $Q^2=0$. In all our examples the Grassmann degree is induced by $\fZ$-degree, i.e. $\p{f}=\gh{f}\, \text{mod}\, 2$, so for simplicity we do not write it explicitly. This can always be reinstated. 

\rchanged{An important example of a $Q$-manifold is $T[1]X$, the shifted tangent bundle of a real manifold $X$. If $x^a$ are local coordinates on $X$
the induced local coordinate system on $T[1]X$ is given by $x^a, \theta^a$ with $\gh{x^a}=0$, $\gh{\theta^a}=1$. Algebra of functions on $T[1]X$ can be identified with the exterior algebra of the underlying manifold $X$. Under this identification the de Rham differential on $X$ corresponds to a homological vector field $\dx$ on $T[1]X$. In terms of the above local coordinates  $\dx$ reads as $\dx=\theta^a \ddl{}{x^a}$.}

A fiber bundle in the category of $Q$-manifolds is known as a $Q$-bundle~\cite{Kotov:2007nr}. Both total space and the base are $Q$-manifolds in this case and the $Q$-structures are required to be compatible with the bundle projection, i.e. if $\pi:(\rchanged{\cM},Q)\to (B,q)$ is a projection one has: $\pi^*\circ q=Q\circ \pi^*$.

A rather far-going generalization of BV formulation of local gauge theories at the level of equations of motion can be cast into the geometric data of a $Q$-bundle. 
The following definition is a geometrical and refined version of the formalism put forward in~\cite{Barnich:2010sw,Grigoriev:2012xg}:
\begin{definition}\cite{Grigoriev:2019ojp}
    (i) Gauge PDE $(E,Q,T[1]X)$ is a $Q$ bundle $\pi:(E,Q)\rightarrow(T[1]X,\dx)$, where $X$ is a real manifold (independent variables). In addition it is assumed that $(E,Q,T[1]X)$ is locally equivalent to a nonnegatively graded $Q$-bundle.\\
    (ii) Sections of $E$ are interpreted as field configurations. Moreover, $\sigma: T[1]X\rightarrow E$ is a solution to $(E,Q,T[1]X)$ if 
\begin{align}\label{solutions}
        \dx\circ\sigma^{*}=\sigma^{*}\circ Q\,.
\end{align}
    (iii) Infinitesimal gauge transformations of $\sigma$ are defined as    
\begin{align}\label{gaugetransf}
        \delta\sigma^{*}={\sigma}^{*}\circ \commut{Q}{Y}\,,
    \end{align}
    where $Y$ is a vertical vector field of ghost degree $-1$ which is to be understood as a gauge parameter.  In a similar way one defines (higher order) gauge for gauge symmetries.
\end{definition}
It is easy to check that transformation~\eqref{gaugetransf} is indeed an infinitesimal symmetry of the equations of motion~\eqref{solutions}.
Note that the gauge transformations can be also defined via
\begin{equation}
\label{aksz-gs}
    \delta\sigma^*=d_X \circ \xi_\sigma^*+\xi_\sigma^* \circ Q\,,
\end{equation}
where $\xi_\sigma^*:\CC^\infty(E)\to \CC^\infty(T[1]X)$ is a gauge parameter map, which has degree $-1$ and satisfies
\begin{equation}
\label{gp-cond}
    \xi_\sigma^*(fg)=(\xi_\sigma^*(f)\sigma^*(g)+(-1)^{\p{f}}\sigma^*(f)\xi_\sigma^*(g)\,,
\end{equation}
as well as $\xi^*_\sigma(\pi^*\alpha)=0$ for any $\alpha \in \CC^\infty(T[1]X)$. It is easy to see that taking $\xi_\sigma=\sigma^*\circ Y$
one indeed recovers \eqref{gaugetransf} if $\sigma$ is a solution. Recall that only the gauge transformations of solutions have invariant meaning.

\subsection{AKSZ models}

The celebrated AKSZ construction~\cite{Alexandrov:1995kv} has been initially found as an elegant geometrical object which encodes a full-scale Lagrangian BV formulation, including BV master action and BV antibracket. In the terminology of the previous subsection, the AKSZ construction is a (usually finite-dimensional) gauge PDE which is globally trivial as a $Q$-bundle, i.e. $(E,Q)=(T[1]X,\dx)\times (F,q)$, whose fiber $F$ is in addition equipped with a compatible symplectic structure $\omega^F$ of ghost degree $n-1$, where $n=\dim(X)$.
The compatibility condition reads 
\begin{equation}
    L_Q\omega=0\,.
\end{equation}
Here $\omega$ is the symplectic structure on the total space $E$ defined by the target space symplectic structure $\omega^F$.
Note that in the adapted coordinates $Q=\dx+q$, i.e. a product $Q$-structure. A natural generalization is to consider source manifolds which are more general than $(T[1]X,\dx)$, but we refrain from doing this here.

The fields and antifields of the theory are supersections of $E$. Let $\psi^A$ denote local coordinates on $F$ and $x^a, \theta^a$ on $T[1]X$. Their pullbacks to  $F \times T[1]X$
define coordinates on $E$, which by some abuse of conventions we keep denoting as $\psi^A,x^a, \theta^a$. Coordinates on the space of supersections are then introduced as:
\begin{equation}
    \Hat{\sigma}^*(\psi^A) = \st{0}{\psi}{}^A(x) + \st{1}{\psi}{}^A_a(x)\theta^a + ...
\end{equation}
Note that $\gh{\st{k}{\psi}{}_{...}^A(x)}=\gh{\psi^A}-k$.
%The space of physical $gh=0$ fields is the space of sections $\sigma$. In coordinates these are $\{ \st{k}{\psi^A_{...}} \quad| \quad gh(\st{k}{\psi^A_{...}}) = 0 \} $

The above geometrical data defines a BV master-action functional on the space of super-sections. Namely, 
the BV action of the theory is then given by 
\begin{equation} \label{AKSZ-action}
    S[\hat\sigma] = \int_{T[1]X} \Hat{\sigma}^*(\chi)(\dx) + \Hat{\sigma}^*(\cL)\,,
\end{equation}
where $\chi$ denotes a symplectic potential, i.e. $\omega=d\chi$,  $\cL$ is defined through $i_Q \omega+d\cL=0$, and \rchanged{$\Hat{\sigma}^*(\chi)(\dx)$ denotes the evaluation of the 1-form $\Hat{\sigma}^*(\chi)$ on the vector field $\dx= \theta^a \ddl{}{x^a}$ i.e. $\Hat{\sigma}^*(\chi)(\dx)=i_{\dx}\Hat{\sigma}^*(\chi)$.}

Moreover, the symplectic structure on $F$ determines in a standard way a symplectic structure $\bar\omega$ on the space of super-sections. If $\hat\sigma$ is a given supersection (a point in the space of super-sections) and $\delta_1\hat\sigma$ and $\delta_2 \hat\sigma$ are two tangent vectors at $\hat\sigma$ then the value of $\bar\omega$ evaluated at $\hat\sigma$ on these tangent vectors is given by:
\begin{equation}
\label{indomega}
\bar\omega_{\hat\sigma}(\delta_1\hat\sigma,\delta_2\hat\sigma)=\int_{T[1]X} \omega_{\hat\sigma}
(\delta_1\hat\sigma,\delta_2\hat\sigma) \,,
\end{equation}
where $\hat\sigma$ is seen as a map from $T[1]X$ to $E$ while $\delta_{1,2}\hat\sigma$ as maps
from $T[1]X$ to $T_{\hat\sigma}E$ so that the integrand is indeed a function on $T[1]X$.

It is instructive to give a concise representation for $\bar\omega$ in terms of coordinates. To this end let us treat $\theta^a$ as auxiliary coordinates and introduce generating functions for coordinates on the space of supersections as:
\begin{equation}
\Psi(x|\theta)=\st{0}{\psi}{}^A(x) + \st{1}{\psi}{}^A_a(x)\theta^a + ...
\end{equation}
and similarly for the basis differentials $\delta\st{k}{\psi}{}^A(x)$. Then 
\begin{equation}
    \Bar{\omega} = \int d^nx d^n\theta \omega_{AB}(\Psi(x|\theta))\delta\Psi^A(x|\theta)\delta \Psi^B(x|\theta)\,.
\end{equation}

Finally, let us spell out the explicit form of the equations of motion derived from the action \eqref{AKSZ-action}:
\begin{equation}
\label{EoM_AKSZ1}
    \sigma^*(\omega_{AB})(d_X\sigma^*(\psi^A) -\sigma^*(Q \psi^A)) = 0\,.
\end{equation}
Note that for a non-degenerate $\omega_{AB}$ these equations are equivalent to the condition that $\sigma$ is a $Q$ morphism, i.e. solutions to~\eqref{EoM_AKSZ1} also solve the respective gauge PDE $(E,Q, T[1]X)$. Note also that for the value of index $A$ such that $\gh{\psi^A}=-1$ equations $d_X\sigma^*(\psi^A) -\sigma^*(Q \psi^A) = 0$ takes the form $\sigma^*(Q\psi^A)=0$, i.e. is a constraint~\cite{Barnich:2006hbb}.

\subsection{Presymplectic gauge PDEs}

Although AKSZ sigma models give a very compact and elegant way to describe BV formulation of interesting gauge field theories in its traditional form the approach is limited to topological (=no local degrees of freedom) and diffeomorphism invariant systems. 

The analog of AKSZ construction in the general case can be arrived at as follows: 
given a local BV system one can for the moment forget about the Lagrangian and concentrate on the equations of motion and gauge symmetries. It is known~\cite{Barnich:2010sw} that, at least locally, it can be equivalently represented as a gauge PDE. Furthermore, it turns out that the Lagrangian and the BV symplectic structure can be encoded in the graded \rchanged{presymplectic}
structure defined on the gPDE. This procedure is formalised in the following definition:
\begin{definition}
    A presymplectic gauge PDE is a gauge PDE $(E,Q,T[1]X)$ equipped with a presymplectic structure $\omega$ such that 
    \begin{equation}
    \label{gPDE}
        d\omega =  0, \qquad L_Q\omega \in \cI\,,
    \end{equation}
where \rchanged{ $\cI \subset \bigwedge^{\bullet}(E)$ denotes the ideal in $\bigwedge^{\bullet}(E)$ generated by differential forms of the form $\pi^*(\alpha)$.}
\end{definition}
This data defines a local BV system, provided $\omega$ satisfies certain regularity condition.
In particular, it defines an action functional on the space of sections of $E$, which generalizes the AKSZ action~\eqref{AKSZ-action},
\begin{equation}
\label{pgPDE-action}
    S[\sigma]=\int_{T[1]X}(\sigma^*\chi)(\dx)+\sigma^*(\cL)\,.
\end{equation}
where $\chi$ is a presymplectic potential $\omega=d\chi$, $\cL$ is defined through $i_Q \omega+d\cL \in \cI$ and \rchanged{$(\sigma^*\chi)(\dx)$ is the evaluation of ${\sigma}^*(\chi)$ on the vector field $d_X = \theta^a \ddl{}{x^a}$.}
It is  invariant under the algebraic gauge transformations generated by the kernel distribution of $\omega$. 
Moreover, $\omega$ is typically local (involves only finite number of coordinates) and hence the above action depends on a finite number of fields only. The extension of the action \eqref{pgPDE-action} to supersections determines the BV action of the theory. The respective BV field-antifield space can be obtained as a symplectic quotient of the space of supersections~\cite{Grigoriev:2020xec,Dneprov:2022jyn,Grigoriev:2022zlq}.

In this approach all the information about the system can be encoded in the geometry of a finite-dimensional quotient of the starting point presymplectic gPDE. Indeed, among possible equivalent gPDE representations of a given system one can pick a minimal one (in the formal setup it is simply the one associated to the minimal model of the respective $L_\infty$ algebra which is known to be unique up to $L_\infty$ isomorphism). Furthermore, disregarding the coordinates along the kernel distribution of the presymplectic structure leads to a finite-dimensional geometrical object which knows everything about the local gauge system in question but at the same time is minimal and unique in a certain sense. We call such objects presymplectic minimal models. More generally, even if one doesn't require them to be minimal, objects of this sort are still useful and can be regarded as the weak versions of presymplectic gPDEs. In this work we perform a detailed study of such objects and the way they encode gauge theories.

\subsection{Local forms in terms of graded geometry}
\label{sec:forms-gradgeom}

For our further considerations it is important to clarify the relation between graded geometry language to describe vertical/horizontal forms on fiber bundles and the standard language. 
\begin{definition}\label{def:vert-forms}
    \rchanged{The algebra of
    vertical forms} on a fiber bundle $E\xrightarrow{\pi}M$ is the quotient $\bigwedge^\bullet(E)/\cI$ where $\cI \subset \bigwedge^\bullet(E)$ is the ideal generated by the forms $\pi^*\alpha, \alpha \in \bigwedge^{k > 0}(M)$.
\end{definition}
Let now $\gtj{E}\to X$ be a fiber bundle equipped with a flat connection so that we have a decomposition of the de Rham differential on $\gtj{E}$ into the horizontal and the vertical parts $d=\dh+\dv$. Let also $E\to T[1]X$ be a lift of $\gtj{E}$ to $T[1]X$, i.e. $E=\Pi^* \gtj{E}$, where $\Pi$ is a canonical projection $\Pi: T[1]X\to X$. In plain terms this means that we extend $\gtj{E}$ by the anti-commuting coordinates $\theta^a$ which are basis differentials $dx^a$ understood as auxiliary coordinates.

The algebra of vertical forms on $E$ (in the sense of \bref{def:vert-forms}) is a differential algebra with the differential induced by de Rham: $d^\prime [a]=[da]$,
where $[a]$ denotes an equivalence class of $a \in \bigwedge^\bullet(E)$.
This algebra is isomorphic to the algebra of all forms on $\gtj{E}$ equipped with the vertical differential. To see this let us first define a map $\Upsilon:\bigwedge^\bullet(E) \to  \bigwedge^\bullet(\gtj{E})$. If $x^a,\psi^A$ is an adapted coordinate system on $\gtj{E}$ and $x^a, \theta^a, \psi^A$ is the induced coordinate system on $E$, the map $\Upsilon$ is determined by: 
\begin{equation}
\label{isom}
\begin{gathered}
    \Upsilon(d\psi^A) = \dv\psi^A \,, \\
    \Upsilon(\theta^a) = d x^a \,,  \\
    \Upsilon(dx^a) = 0 = \Upsilon(d\theta^a)\,,  \\
    \Upsilon(f(x,\psi)) = f(x,\psi)\,.
\end{gathered}
\end{equation}
Note that $\Upsilon$ employs an additional structure on $\gtj{E}$ --  the flat connection. 
It is easy to see that $\Upsilon(\cI)=0$ and hence $\Upsilon$ is well defined on the equivalence classes modulo $\cI$. Moreover,
the induced map is an isomorphism between the differential algebras $(\bigwedge^\bullet(E)/\cI,d^\prime)$ and $(\bigwedge^{(\bullet,\bullet)}(\gtj{E}),\dv)$. 

The above isomorphism allows one to define analogs of $\dv$ and $\dh$ in terms of the algebra of vertical forms $(\bigwedge^\bullet(E)/\cI)$. In particular, $\dv$ is represented by $d^\prime$ while for $\dh$ we have:
\begin{equation}
\label{isom2}
\dh \Upsilon([a])=\Upsilon([L_D a])\,, \qquad D=\theta^a D_a\,,
\end{equation}
where $D_a$ are components of the covariant derivative, i.e. $D_a$ is a horizontal lift of $\dl{x^a}$ and $\dh=dx^a D_a$. It is easy to check that the expression in the right hand side does not depend on the choice of the representative.

Finally, note that the construction can be generalized to the case where the connection $D$ is not flat. This is achieved by taking $\Upsilon d\psi^A=d\psi^A-dx^aD_a\psi^A$. The map is still coordinate independent but the induced analog of $\dv$  is not nilpotent.

\section{Weak presymplectic gPDEs}
\label{sec:main}

In this section we define our main objects and show how they encode a full-scale local BV system.

\subsection{Symplectic quotients of pre-$Q$ manifolds}

The following simple statement explains the idea of the construction:
\begin{lemma}
Let $(M,\omega)$ be a presymplectic supermanifold equipped with an odd vector field $Q$ satisfying:
\begin{equation}
        \omega(Q,Q) = 0, \qquad L_Q \omega = 0\,.
\end{equation}
Let $\algA$ be a subalgebra of functions annihilated by the kernel distribution of $\omega$. Then $Q$ preserves $\algA$ and moreover $Q^2f=0$  for any $f \in \algA$.
\end{lemma}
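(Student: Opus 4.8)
The plan is to work locally and use Darboux-type adapted coordinates for the presymplectic form. First I would note that since $\omega$ is closed and of constant rank (presymplectic), locally one can choose coordinates $z^i$ splitting into $p^\alpha, q_\alpha$ (conjugate pairs) and $u^\mu$ (coordinates along the kernel $\ker\omega$), so that $\omega = dp^\alpha\, dq_\alpha$ (up to signs/gradings) and $\partial/\partial u^\mu$ span $\ker\omega$. In these coordinates the subalgebra $\algA$ of functions annihilated by the kernel distribution is exactly the functions independent of $u^\mu$, i.e. $\algA = \{f : \partial f/\partial u^\mu = 0\}$.

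The first key step is to show $Q$ preserves $\algA$, equivalently that $Q$ has no $\partial/\partial u^\mu$ component, equivalently $[K, Q]$ lands in the kernel distribution for every kernel vector field $K$. This follows from $L_Q\omega = 0$: for any $K$ with $i_K\omega = 0$, compute $i_{[Q,K]}\omega = L_Q i_K \omega \pm i_K L_Q \omega = 0$, so $[Q,K] \in \ker\omega$. Hence the kernel distribution is $Q$-invariant (as a module over functions, using that it is also involutive — $\ker\omega$ is integrable since $\omega$ is closed), and therefore its annihilator subalgebra $\algA$ is $Q$-stable. Concretely, $Q(u^\mu)$ need not vanish, but for $f\in\algA$, $Qf$ is again independent of $u^\mu$ because acting with a kernel derivation $K$ on $Qf$ gives $K Q f = \pm Q K f + [K,Q]f = 0 + 0$, the last term vanishing since $[K,Q]$ is a kernel vector field and $f\in\algA$.

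The second key step is the nilpotency-on-$\algA$ claim. Here I would use that $\omega(Q,Q)=0$ together with Hamiltonicity. Since $L_Q\omega = 0$ and $\omega$ is exact locally ($\omega = d\chi$), $i_Q\omega$ is closed, so locally $i_Q\omega = -dH$ for some function $H$ (the "Hamiltonian" of $Q$); note $\omega(Q,Q) = i_Q i_Q \omega = -i_Q dH = -QH$, so the hypothesis says $QH$ is constant, which we can take to be zero. Now for $f\in\algA$ I want $Q^2 f = 0$. Write $Q^2 = \tfrac12[Q,Q]$; the vector field $[Q,Q]$ is even of degree $2$. I would show $[Q,Q]$ lies in the kernel distribution: indeed $i_{[Q,Q]}\omega = 2 L_Q i_Q \omega = 2 L_Q(-dH) = -2\, d(QH) = 0$, using $L_Q\omega=0$ in the first equality and $L_Q d = d L_Q$. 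Hence $[Q,Q]\in\ker\omega$, i.e. $[Q,Q] = \tfrac12\, (Q^2 u^\mu)\,\partial/\partial u^\mu$ in adapted coordinates. Therefore for any $f\in\algA$ (independent of $u^\mu$) we get $Q^2 f = \tfrac12 [Q,Q] f = 0$. This is exactly the assertion.

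The main obstacle I anticipate is the sign/degree bookkeeping in the graded-odd setting: $Q$ is odd, so identities like $i_{[Q,K]}\omega = L_Q i_K\omega - (-1)^{|K|} i_K L_Q\omega$ and $L_Q d = d L_Q$ (valid since $d$ is even) and $i_{[Q,Q]} = [L_Q, i_Q]$ must be applied with care about Koszul signs, and one should double-check that $i_Q i_Q \omega$ is not identically zero for grading reasons (it is not, since $\omega$ has odd total degree in the relevant examples). A secondary, more technical point is the passage from the local Darboux picture to the coordinate-free statement: one needs that $\ker\omega$ is a (locally constant-rank, hence integrable) distribution so that "functions annihilated by the kernel distribution" genuinely form a subalgebra and the local computation patches together; if one does not want to assume constant rank globally, the statement should still be read as a local/pointwise one, which suffices for its use in defining the symplectic quotient.
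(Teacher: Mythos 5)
Your proof is correct, and its first half (showing that $Q$ preserves $\algA$ because $i_{\commut{Q}{Z}}\omega = L_Q i_Z\omega = 0$ for any kernel vector field $Z$, so $\commut{Z}{Q}$ again lies in the kernel) is exactly the paper's argument. For the second half the paper takes a more economical route: it applies $d$ directly to the hypothesis $i_Q i_Q\omega=0$ and expands by Cartan calculus,
$0 = d\, i_Q i_Q \omega = i_Q i_Q d\omega - 2\, i_Q L_Q\omega - i_{\commut{Q}{Q}}\omega$,
so that $d\omega=0$ and $L_Q\omega=0$ immediately give $i_{\commut{Q}{Q}}\omega=0$, i.e.\ $Q^2\in\cK$, with no appeal to Darboux coordinates or to a Hamiltonian. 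Your route --- introducing a local $H$ with $i_Q\omega=-dH$ and computing $i_{\commut{Q}{Q}}\omega \sim d(QH)=0$ --- reaches the same conclusion, but it needs local exactness of the closed one-form $i_Q\omega$ and, where you invoke Darboux-type coordinates, a constant-rank assumption that the lemma itself does not require; note that your own kernel-annihilation computation already makes the Darboux picture dispensable, since ``$f$ is annihilated by every kernel vector field'' is all that is used. Two cosmetic slips do not affect the argument: $\omega(Q,Q)=0$ gives $QH=0$ outright rather than merely constant, and the factor of $2$ in your identity $i_{\commut{Q}{Q}}\omega = 2 L_Q i_Q\omega$ is not right in the standard graded Cartan conventions (one gets $L_Q i_Q\omega$ once $L_Q\omega=0$ is used), but this is irrelevant because the right-hand side is shown to vanish anyway.
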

If $\omega$ is regular $\algA$ is just an algebra of functions on the symplectic quotient, at least locally. This gives a useful generating construction for symplectic $Q$-manifolds.  The statement and the proof were in~\cite{Grigoriev:2022zlq} thought it was probably known before.
For completeness  we sketch the proof here as well. Let $\cK$ be a kernel distribution of $\omega$, i.e. $i_\cK\omega=0$. It is involutive thanks to $d\omega=0$. For any $f \in \algA$ and $Z \in \cK$ one has $ZQf=\commut{Z}{Q}f=0$ because $i_{\commut{Q}{Z}}\omega=L_Q i_Z\omega=0$ so that $Q$ preserves $\algA$.  Finally, $Q^2f=0$ for all $f\in \algA$ because
\begin{equation}
\begin{gathered}
    d i_Q i_Q \omega = i_Q d i_Q \omega - L_Q i_Q \omega = i_Q i_Q d \omega - 2i_Q L_Q \omega -i_{[Q,Q]}\omega = 0\,,
\end{gathered}
\end{equation}
so that $Q^2 \in \cK$.

\subsection{Presymplectic local BV systems}
In what follows we need the following statement, whose non-graded version is known, see e.g.~\cite{Gaset:2022ema}:
\begin{lemma} \label{prop:bundle_quotient}
    Let $E \xrightarrow{\pi} X$ be a graded bundle equipped with $\omega \in \bigwedge^{n+2}(E)$, a closed form, such that its kernel $\cK$ has constant rank and is $\pi$ vertical, i.e. $i_K\omega = 0 \rightarrow \pi_* K = 0$. Then, at least locally, $E/\cK$ is also a bundle over $X$ equipped with a non-degenerate form $\gtj{\omega}$.
\end{lemma}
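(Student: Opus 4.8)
The plan is to work entirely locally and exploit the fact that a closed form of constant-rank kernel defines an integrable distribution, so that the quotient by the leaves is smooth; the only genuine content is checking that the quotient remains a bundle over $X$ and that the induced form is closed and nondegenerate. First I would fix a point of $E$ and choose adapted local coordinates. Since $\omega$ is closed and $\cK$ has constant rank $r$, the distribution $\cK$ is involutive (the same computation as in the previous lemma, $d\omega=0$ forces $i_{[Z_1,Z_2]}\omega=0$ for $Z_1,Z_2\in\cK$), hence integrable by the graded Frobenius theorem. Because $\cK$ is $\pi$-vertical, each leaf of $\cK$ sits inside a single fiber $E_x$, so the leaf space $E/\cK$ fibers over $X$ with projection $\gtj\pi$ descending from $\pi$; locally one can pick coordinates $x^a$ on $X$, coordinates $u^i$ ($i=1,\dots,r$) along the leaves, and coordinates $y^\mu$ transverse to the leaves within the fibers, so that $x^a,y^\mu$ descend to coordinates on $E/\cK$ and $\gtj\pi$ is just $(x,y)\mapsto x$.

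Next I would produce $\gtj\omega$. By the graded analog of the presymplectic Darboux/foliation argument, $\omega$ is basic with respect to the foliation by $\cK$: it is $\cK$-horizontal by definition ($i_Z\omega=0$) and $\cK$-invariant because $L_Z\omega = i_Z d\omega + d i_Z\omega = 0$ for $Z\in\cK$. A differential form that is both horizontal and invariant for an integrable distribution descends to the leaf space; this gives a unique form $\gtj\omega$ on $E/\cK$ with $\rho^*\gtj\omega=\omega$, where $\rho:E\to E/\cK$ is the (local) quotient map. Closedness of $\gtj\omega$ follows from $\rho^*d\gtj\omega = d\rho^*\gtj\omega = d\omega = 0$ together with surjectivity of $\rho$ on forms, i.e. injectivity of $\rho^*$. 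That $\gtj\omega$ is $\gtj\pi$-relative of the correct bidegree, i.e. an $(n,2)$-form, is automatic since $\omega$ was an $(n,2)$-form on $E$ and $\rho$ respects the base.

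Finally, nondegeneracy: a tangent vector $\gtj v$ at a point of $E/\cK$ lies in the kernel of $\gtj\omega$ iff any lift $v$ to $E$ satisfies $i_v\omega\in\text{span of }\cK\text{-annihilators}$, but since $\rho^*\gtj\omega=\omega$ and $i_Z\omega=0$ exactly on $\cK$, the kernel of $\gtj\omega$ is $\rho_*\cK=0$. Hence $\gtj\omega$ is nondegenerate on $E/\cK$, and since it has bidegree $(n,2)$ it restricts fiberwise to a nondegenerate $2$-form, making $\gtj\omega$ an honest presymplectic-of-BV-type form in the sense needed for Definition~\ref{def:local-BV}.

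The main obstacle — really the only delicate point — is the graded Frobenius theorem and the descent-to-leaf-space step: one must be careful that "$\cK$ has constant rank" is understood in the graded sense (constant rank of the even part and constant rank of the odd part of the kernel as a locally free subsheaf of the tangent sheaf), that the $\pi$-verticality hypothesis is genuinely needed to keep the leaves inside fibers so that the quotient is a bundle over $X$ rather than merely a manifold, and that the local quotient map $\rho$ has a well-defined action on forms with $\rho^*$ injective. All of this is standard once phrased correctly, and the non-graded statement referenced in~\cite{Gaset:2022ema} carries over verbatim with degrees tracked; so I would present the argument as the graded version of that classical result, emphasizing only the verticality bookkeeping and the horizontal-plus-invariant criterion for descent of $\omega$.
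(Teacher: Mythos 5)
Your argument is correct, and it is worth noting that the paper itself does not prove this lemma at all: it is stated with a pointer to the non-graded version in~\cite{Gaset:2022ema}, so your write-up supplies exactly the standard descent proof that the authors leave to the literature (kernel of a closed form is involutive, graded Frobenius, leaves stay inside fibers by $\pi$-verticality, and $\omega$ descends because it is both $\cK$-horizontal and $\cK$-invariant, with nondegeneracy of $\gtj{\omega}$ following since $\rho^*\gtj{\omega}=\omega$ and $\rho$ is a submersion, so any kernel vector of $\gtj{\omega}$ lifts into $\cK$ and hence projects to zero). Two small points of hygiene: the phrase ``$i_v\omega$ lies in the span of $\cK$-annihilators'' is muddled --- the clean statement is simply that a lift $v$ of a kernel vector of $\gtj{\omega}$ satisfies $i_v\omega=0$, hence $v\in\cK$ and $\rho_*v=0$; and the remark that $\gtj{\omega}$ has bidegree $(n,2)$ is not part of the lemma as stated (there $\omega$ is only an $(n+2)$-form on $E$), though it does hold in the application to Definition~\bref{def:p-BV}. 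Your emphasis on the graded-rank caveat in Frobenius and on verticality being what keeps the quotient fibered over $X$ is exactly the right bookkeeping.
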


The definition of a local BV system has a straightforward generalisation to the case 
where the symplectic structure is replaced by the presymplectic one:
\begin{definition}\label{def:p-BV}
A presymplectic local BV  system  with the underlying fiber bundle $\cE\to X$ is determined by the following data:\\
(i) a degree-1, evolutionary vertical vector field $s$ defined on $J^\infty(\cE)$
\\
(ii) an $(n,2)$-form $\st{n}{\omega}\in \bigwedge^{(n,2)}(J^\infty(\cE))$ of ghost degree $-1$, which is a pullback to $J^\infty(\cE)$ of a closed $n+2$ form $\omega^\cE$ on $\cE$, such that
\begin{equation}
\label{eq:s-inv-2}
i_s \st{n}{\omega}+\dv \st{n}{\cL}=\dh(\cdot)\,,
\end{equation}
for some $\st{n}{\cL} \in \bigwedge^{(n,0)}(J^\infty(\cE))$. 
In addition, the kernel distribution $\cK$ of $\omega^\cE$ is required to be vertical, i.e. $\cK \subset V\cE \subset T\cE$.\\
(iii) 
\begin{equation}
\label{p-master}
\half i_s i_s\st{n}{\omega}=\dh(\cdot)\,.
\end{equation}
\end{definition}
Note that  compared  to the Definition~\bref{def:local-BV}, the nilpotency condition is replaced by~\eqref{p-master}. At the same time, \eqref{eq:s-inv-2} is equivalent to $L_s\st{n}{\omega}+\dh (\ldots)=0$, at least locally. In the symplectic case these definitions are equivalent.

Note the following statement:
\begin{lemma}\label{prop:nilp-mod-kernel}
In the setting of Definition~\bref{def:p-BV}, $s^2\equiv\half \commut{s}{s}$ belongs to the kernel of $\st{n}{\omega}$, i.e. $i_{s^2}\st{n}{\omega}=0$. 
\end{lemma}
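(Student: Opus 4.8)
The plan is to mimic the argument already used in the first Lemma of Section~\bref{sec:main} (the symplectic quotient of pre-$Q$ manifolds), but now carried out at the level of the $(n,2)$-form $\st{n}{\omega}$ and modulo $\dh$-exact terms rather than on the nose. Concretely, I would work with the graded-geometric repackaging of local forms introduced in Section~\bref{sec:forms-gradgeom}: the $(n,\bullet)$-forms on $J^\infty(\cE)$ become vertical forms on the associated bundle $E\to T[1]X$ and the combination $\dv+\dh$ is the restriction of the de Rham differential. Equation~\eqref{eq:s-inv-2} says $L_s\st{n}{\omega}\in\im\dh$ (equivalently $i_s\st{n}{\omega}+\dv\st{n}{\cL}\in\im\dh$), and \eqref{p-master} says $i_si_s\st{n}{\omega}\in\im\dh$.

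First I would compute $d\big(i_si_s\st{n}{\omega}\big)$ using the Cartan-calculus identities exactly as in the proof of the first Lemma:
\begin{equation}
d\,i_s i_s\st{n}{\omega}=i_s d\, i_s\st{n}{\omega}-L_s i_s\st{n}{\omega}
= i_s i_s d\st{n}{\omega}-2 i_s L_s\st{n}{\omega}-i_{[s,s]}\st{n}{\omega}\,.
\end{equation}
Here $d\st{n}{\omega}=0$ since $\st{n}{\omega}$ is the pullback of the closed form $\omega^\cE$ (so $\dv\st{n}{\omega}=0=\dh\st{n}{\omega}$ in the bidegree language). Next I would feed in the two hypotheses: $L_s i_s\st{n}{\omega}=\half L_s[s,s]\text{-type}$ is $\dh$-exact by~\eqref{p-master} together with the fact that $L_s$ commutes with $\dh$ up to the standard evolutionary-vector-field subtleties, and $L_s\st{n}{\omega}$ is $\dh$-exact by~\eqref{eq:s-inv-2}, so $i_s L_s\st{n}{\omega}$ is $\dh$-exact as well (since $i_s$ commutes with $\dh$ acting on horizontal forms). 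The upshot is $i_{[s,s]}\st{n}{\omega}=\dh(\cdot)$, i.e. $i_{s^2}\st{n}{\omega}$ is $\dh$-exact, where I write $s^2=\half[s,s]$.

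The remaining and genuinely delicate point is to upgrade "$i_{s^2}\st{n}{\omega}$ is $\dh$-exact'' to "$i_{s^2}\st{n}{\omega}=0$''. The key structural input is that $s^2$ is an evolutionary vertical vector field of ghost degree $2$, hence $i_{s^2}\st{n}{\omega}$ is the pullback to $J^\infty(\cE)$ of a fibrewise object: writing $\omega^\cE=(dx)^n\omega_{AB}(\psi,x)d\psi^Ad\psi^B$ as in~\eqref{cEform}, one has $i_{s^2}\st{n}{\omega}=(dx)^n\omega_{AB}\,(s^2)^A\,d\psi^B$ up to a sign, which lies in $\bigwedge^{(n,1)}$ and contains no prolonged/jet coordinates beyond what enters $\omega^\cE$ itself. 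Such a form cannot be $\dh$-exact unless it vanishes, because a $\dh$-exact $(n,1)$-form $\dh\beta$ with $\beta\in\bigwedge^{(n-1,1)}$ necessarily involves $\theta$-derivatives (total derivatives $D_a$) that produce jet coordinates of strictly higher order than those present, so by a degree/order count (or, more invariantly, by acting with an appropriate contracting homotopy / by evaluating on a section through an arbitrary jet) one concludes $i_{s^2}\st{n}{\omega}=0$. I expect this last step — making precise why a "field-local, zeroth-order in the extra jets'' horizontal $(n,1)$-form that is $\dh$-exact must be zero — to be the main obstacle, and I would phrase it either via the fact that $\dh$ strictly raises the jet order of the highest coordinate appearing, or by pulling back along $j^\infty\sigma$ for generic $\sigma$ so that $\dh(\cdot)$ becomes an ordinary de~Rham exact $n$-form on $X$ of a special form whose vanishing forces the fibrewise tensor $\omega_{AB}(s^2)^A$ to vanish pointwise, which is exactly the assertion $i_{s^2}\st{n}{\omega}=0$.
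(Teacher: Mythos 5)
Your argument is in substance the paper's own proof: the same Cartan-calculus manipulation of the pre-master equation \eqref{p-master}, combined with \eqref{eq:s-inv-2} and $d\st{n}{\omega}=0$, gives $i_{[s,s]}\st{n}{\omega}=\dh(\cdot)$, and the conclusion then rests on the fact that, because $\st{n}{\omega}$ is pulled back from $\cE$, the form $i_{[s,s]}\st{n}{\omega}$ is an $(n,1)$-form containing only the zeroth-order contact forms $\dv\psi^A$, and such a "source" form cannot be $\dh$-exact unless it vanishes. The only real difference is packaging: the paper applies the interior Euler projector $I$ (with $I\dh=0$, $I^2=I$, and $I$ acting as the identity on forms independent of higher jets) directly to $\dv$ of \eqref{p-master}, which makes the final step immediate, whereas you first extract $\dh$-exactness and then argue it away separately. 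On that final step, be aware that your raw jet-order count is not by itself conclusive: $\dh\beta$ can have cancellations between contact forms of different orders, so "total derivatives raise the top jet order" does not directly exclude a nonzero source form from being $\dh$-exact; likewise, pulling back along $j^\infty\sigma$ alone is vacuous because contact forms annihilate holonomic sections. The correct completion is the one you also mention: contract with an arbitrary compactly supported evolutionary vector field, pull back, and integrate (the fundamental lemma of the calculus of variations), or equivalently invoke the contracting-homotopy/interior-Euler-operator machinery of the variational bicomplex, which is exactly the tool the paper uses.
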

\begin{proof}
Let $I$ denote  the interior Euler projector defined through $I^2=I$ and $I(\dh a)=0$ for any $a\in \bigwedge^{(n-1,s)}(J^\infty(\cE)$, $s\geq 1$ see e.g.~\cite{Anderson1991,Andersonbook} for further details. Applying $I\dv$ to the equation \eqref{p-master} one gets $I(\dv i_s i_s \st{n}{\omega}) = I(\dv \dh(\cdot)) = I\dh(\cdot) = 0$. On the other hand rewriting the l.h.s. one gets
\begin{equation}
    I(-2i_s L_s \st{n}{\omega} - i_{[s,s]}\st{n}{\omega}) =  0\,.
\end{equation}
Using \eqref{eq:s-inv-2} the first term can be rewritten as:
\begin{equation}
    I(-2i_sL_s \st{n}{\omega}) = I(2i_s\dv\dh(\cdot)) = 
    %I(-2i_s \dh(\dv(\cdot))) =
    I(\dh(i_s\dv(\cdot))) = 0\,,
\end{equation}
giving $I i_{[s,s]}\st{n}{\omega} = 0$. Because $\st{n}{\omega}$ is a pullback of a form from $\cE$ it does not depend on higher jets and therefore the Euler projector (see \cite{Anderson1991} for details) acts on it identically this equation implies $i_{[s,s]}\st{n}{\omega} = 0$.
\end{proof}

It turns out that a presymplectic local BV system still defines a local BV system on a jet-bundle of a symplectic quotient of $\cE$, at least locally. More precisely:
\begin{prop}\label{prop:p-BV-quotient}
Given a presymplectic local BV system,  assume in addition that the kernel distribution is regular. Then, at least locally,  this data naturally induces a local BV system with the underlying bundle being the symplectic quotient of $\cE$ by the kernel distribution $\cK$. 
\end{prop}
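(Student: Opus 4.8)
The plan is to pass to the symplectic quotient bundle, transport $s$ and the BV Lagrangian to it, and then promote the weak nilpotency of Lemma~\ref{prop:nilp-mod-kernel} to genuine nilpotency by exploiting non-degeneracy downstairs. \emph{Step~1 (the reduced bundle).} Since $\cK$ is regular and $\pi$-vertical, Lemma~\ref{prop:bundle_quotient} applies to $\omega^\cE$ and produces, at least locally, a bundle $p\colon\cE\to\cE_\red:=\cE/\cK$ over $X$ together with a closed, non-degenerate $(n+2)$-form $\omega^{\cE_\red}$ of ghost degree $-1$ satisfying $p^{*}\omega^{\cE_\red}=\omega^{\cE}$. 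Prolonging $p$ to $p_\infty\colon J^\infty(\cE)\to J^\infty(\cE_\red)$ and pulling $\omega^{\cE_\red}$ back gives $\st{n}{\omega}_\red\in\bigwedge^{(n,2)}(J^\infty(\cE_\red))$ with $p_\infty^{*}\st{n}{\omega}_\red=\st{n}{\omega}$, and $\omega^{\cE_\red}$ has no zero vectors; this already secures item~(ii) of Definition~\ref{def:local-BV} for whatever $s_\red$ we build.

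\emph{Step~2 (descent of $\st{n}{\cL}$ and $s$).} I would work in adapted coordinates $x^a,y^i,w^\alpha$ with $\cK=\langle\partial_{w^\alpha}\rangle$; then $\omega^\cE$, being annihilated by $\cK$ and (because $d\omega^\cE=0$) Lie-invariant along $\cK$, involves neither $w^\alpha$ nor $dw^\alpha$. Contracting the defining relation~\eqref{eq:s-inv-2} with prolongations of the kernel vector fields makes the $i_{(\cdot)}\st{n}{\omega}$ term drop out, so $\dv\st{n}{\cL}$ restricted to the kernel directions is $\dh$-exact; feeding this into the acyclicity of $\dv$ modulo $\dh$ in the relevant bidegree (the same input used in Lemma~\ref{prop:nilp-mod-kernel}) allows one to choose $\st{n}{\cL}$, within its $\dh$-class, to be a pullback $p_\infty^{*}\st{n}{\cL}_\red$. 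Re-inserting this into~\eqref{eq:s-inv-2} and using non-degeneracy of $\st{n}{\omega}_\red$, the equation $i_{s_\red}\st{n}{\omega}_\red+\dv\st{n}{\cL}_\red=\dh(\cdot)$ fixes a unique degree-$1$ evolutionary vertical vector field $s_\red$ on $J^\infty(\cE_\red)$ (existence from surjectivity of $V\mapsto i_V\st{n}{\omega}_\red$ on the pertinent space, uniqueness from its injectivity); by construction $s$ is $p_\infty$-related to $s_\red$ modulo the prolonged kernel, and $L_{s_\red}\st{n}{\omega}_\red+\dh(\ldots)=0$, i.e.~\eqref{eq:s-inv} holds.

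\emph{Step~3 (nilpotency and conclusion).} Pulling back $\half i_{s_\red}i_{s_\red}\st{n}{\omega}_\red$ along $p_\infty$ yields $\half i_s i_s\st{n}{\omega}$, which is $\dh$-exact by~\eqref{p-master}; hence $(\cE_\red,s_\red,\omega^{\cE_\red})$ is again a presymplectic local BV system, and Lemma~\ref{prop:nilp-mod-kernel} gives $i_{s_\red^{2}}\st{n}{\omega}_\red=0$. But $\st{n}{\omega}_\red$ is the pullback of the non-degenerate $\omega^{\cE_\red}$, and $s_\red^{2}$ is evolutionary, hence determined by its generating functions along the fibre of $\cE_\red$; invertibility of $\omega^{\cE_\red}$ forces these to vanish, so $s_\red^{2}=0$. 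Thus $(\cE_\red,s_\red,\st{n}{\omega}_\red)$ satisfies every axiom of Definition~\ref{def:local-BV}, i.e.\ it is a local BV system; since its construction depends on no choice beyond a local adapted chart and is readily seen to be chart-independent, the induced structure is natural, which proves the claim.

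\emph{Main obstacle.} The load-bearing step is Step~2: showing the BV Lagrangian admits a representative independent of the kernel coordinates \emph{and all their jet prolongations}. The naive prolonged kernel distribution is not preserved by the total derivatives $D_a$, so this independence holds only modulo $\dh$-exact terms and must be extracted from the variational bicomplex (interior Euler operator, acyclicity of $\dv$ mod $\dh$); the same analysis underlies the surjectivity used to define $s_\red$. Everything else is essentially bookkeeping once the non-graded statement of Lemma~\ref{prop:bundle_quotient} is available.
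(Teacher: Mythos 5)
Your overall strategy is viable and genuinely different from the paper's in its middle step, but the step you yourself flag as load-bearing is not actually established, and the tool you invoke for it is not the right one. Knowing that $Z_{pr}\st{n}{\cL}\in\im\dh$ for all kernel vector fields $Z$ (which does follow from contracting \eqref{eq:s-inv-2} with prolonged kernel fields, as you say) does not become a kernel-independent representative of $\st{n}{\cL}$ by ``acyclicity of $\dv$ mod $\dh$'': that acyclicity controls the $\dv$-cohomology, not invariance under a foliation. What actually closes this gap, locally, is a fiberwise homotopy argument in coordinates straightening $\cK=\langle\partial_{w^\alpha}\rangle$: the Euler field $w^\alpha\partial_{w^\alpha}$ is a section of $\cK$, its prolongation commutes with the fiberwise scaling of the $w$-jets, and integrating $Z_{pr}\st{n}{\cL}=\dh(\cdot)$ along the scaling gives $\st{n}{\cL}=\st{n}{\cL}\big|_{w\text{-jets}=0}+\dh(\cdot)$. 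Similarly, the claim that ``by construction $s$ is $p_\infty$-related to $s_\red$ modulo the prolonged kernel'' is not by construction: $s$ is given and $s_\red$ is built independently as the Hamiltonian vector field of $\st{n}{\cL}_\red$; to relate them you must apply the interior Euler operator to both Hamiltonian conditions and use nondegeneracy of the reduced form to conclude that the zero-jet characteristics agree. This relation is needed in your Step~3, and so is one more small point you skip: $p_\infty^*\alpha=\dh\beta$ upstairs only gives $\alpha=\dh(\cdot)$ downstairs after restricting to the section $w\text{-jets}=0$ (which commutes with $\dh$).

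It is worth noting that the paper's proof bypasses the descent-of-$\cL$ problem entirely, and this is the main structural difference. After forming the quotient via Lemma~\bref{prop:bundle_quotient}, the paper (locally, non-canonically) embeds $\cG$ back into $\cE$, hence $J^\infty(\cG)\subset J^\infty(\cE)$, and simply pulls back \eqref{eq:s-inv-2} and \eqref{p-master} along this embedding: since the embedding commutes with $\dh$, and since the projection of $s$ differs from $s$ only by kernel vectors which do not contribute to contractions with $\st{n}{\omega}$, the reduced equations hold for the projected (generally non-evolutionary) $\tilde s$; one then replaces $\tilde s$ by the evolutionary $s^\prime$ agreeing with it on zero jets, which is harmless because $\st{n}{\omega}$ is a pullback of a form on $\cG$ and the relevant contractions probe only the zero-jet action. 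Both routes then finish identically with Lemma~\bref{prop:nilp-mod-kernel} plus nondegeneracy. Your approach buys a quotient-intrinsic description of $\st{n}{\cL}_\red$ and $s_\red$ at the price of the homotopy argument and the relatedness check above; the paper's buys brevity at the price of a non-canonical local splitting. As written, though, your Step~2 is a sketch with the decisive descent statement unproved.
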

\begin{proof}
The proof is based on the explicit construction. At first step we employ Lemma~\bref{prop:bundle_quotient}  to construct the respective quotient which is itself a new fiber bundle $\cG \to X$, which is well defined at least locally.
This comes together with the projection $\pi_\cG:\cE \to \cG$ which is fiberwise. 
Restricting to the local analysis, we can then realise $\cG$ as a subbundle in $\cE$ which in turn \rchanged{induces}
the embedding of $J^\infty(\cG)\subset J^\infty(\cE)$ as a subbundle. In particular,
the pullback by the embedding commutes with the horizontal differential $\dh$.
Note, however, that the embedding is not canonical. 

The projection $\pi_\cG:\cE \to \cG$ induces a projection  
$J^\infty(\cE) \to J^\infty(\cG)$ of the respective jet bundles. Together with the embedding it defines a projection $\tilde{s}$ of $s$ to $J^\infty(\cG)$. The projection is generally not evolutionary (because $s$ is not tangent to $J^\infty(\cG)$, in general). By pulling back \eqref{p-master} to $J^\infty(\cG)$ one finds:
\begin{equation} 
\label{s-inv-red}
i_{\tilde{s}} i_{\tilde{s}} (i^* \st{n}{\omega})=\dh(\cdot)\,,
\end{equation}
where $i$ denotes the embedding $J^\infty(\cG) \to J^\infty(\cE)$. Indeed,
$i_{\tilde s} (i^*\st{n}{\omega})=i^*(i_{s} \st{n}{\omega})$ because projection can only add a vector from the kernel. Then $i^*(\dh f)=\dh (i^* f)$ for any $f\in \bigwedge^\bullet(J^\infty(\cE))$. The same arguments show that 
the pull-back of \eqref{eq:s-inv-2} gives:
\begin{equation}
\label{p-master-red}
    i_{\tilde s} i^*\st{n}{\omega}+\dv i^* \st{n}{\cL}=\dh(\cdot) \,.
\end{equation}

Observe that $i_{\tilde s} i_{\tilde s} (i^* \st{n}{\omega})$ depends only on the action of $\tilde s$ on zero jets because $\st{n}{\omega}$ is itself a pullback to $J^\infty(\cG)$ of the form on $\cG$. It follows, the equation is also fulfilled if $\tilde s$ is replaced by the evolutionary vector field $s^\prime$ which acts on zeroth jets just like $\tilde s$, i.e. if $\psi^\alpha$ are coordinates on the fibers of $\cG$
pulled back to $J^\infty(\cG)$ then $s^\prime$ is uniquely determined by $s^\prime \psi^\alpha=\tilde s\psi^\alpha$ and that $s^\prime$ is evolutionary. Finally, \rchanged{Lemma}~\bref{prop:nilp-mod-kernel}
implies that an evolutionary vector field $s^\prime$ of degree $1$ satisfying~\eqref{s-inv-red} and \eqref{p-master-red} is nilpotent provided $i^*\st{n}{\omega}$ is nondegenerate and hence  this data defines a local BV system.
\end{proof}

\subsection{Weak presymplectic gPDEs}

\begin{definition} \label{weak_presymp_g_PDE}
A weak presymplectic gauge PDE is a $Z$ graded fiber bundle $E \xrightarrow[]{\pi} T[1]X$, \rchanged{$\dim(X)=\dmsn$},
equipped with a 2-form $\omega$ of degree $n-1$, a 0-form $\cL$ of degree $n$ and a vector field $Q$ of degree 1 satisfying $Q\circ \pi^* = \pi^* \circ \dx$ and 
\begin{equation}
\label{wgPDE}
    d\omega = 0, \qquad i_Q\omega + d\mathcal{L} \in \mathcal{I}, \qquad \frac{1}{2} i_Q i_Q\omega + Q\mathcal{L} = 0\,.
\end{equation}
\rchanged{where $\cI \subset \bigwedge^\bullet(E)$ is the ideal generated by elements of the form $\pi^*\alpha, \alpha \in \bigwedge^{k > 0}(T[1]X)$.}
\end{definition}
This version of the definition is a slight variation of that given in~\cite{Grigoriev:2022zlq}. Note that similar axioms originally appeared in~\cite{Alkalaev:2013hta}. This data still defines an action functional~\eqref{pgPDE-action} and, in fact, a presymplectic local BV system. \rchanged{It is important to mention that at this level we do not ask $\omega$ to be regular.}

Solutions to this system can be defined as solutions of the Euler-Lagrange equations determined by~\eqref{pgPDE-action} and understood modulo the natural algebraic gauge equivalence generated by the vector fields from the kernel of $\omega$. Of course one can equally well say that the physical interpretation of the system is that of the induced local BV system.

As an alternative one can define such objects by specifying, instead of $\omega$, a presymplectric potential $\chi$ so that $\omega=d\chi$. This fixes the boundary terms in the action functional~\eqref{pgPDE-action}. Moreover, it is easy to check that in this case adding to $\chi$ a 1-form from $\cI$ accompanied by a suitable transformation of $\cL$ is a symmetry of the equations~\eqref{wgPDE} and also does not affect the action~\eqref{pgPDE-action} modulo a field-independent boundary term. To see this note that if $\omega$ and $Q$ are given, the ambiguity in $\cL$
satisfying the second condition from~\eqref{wgPDE} is given by functions of the form $\pi^*(h), h\in\cC^\infty(T[1]X)$ so that the integrand of~\eqref{pgPDE-action} can only change by $\pi^*(h)$.

An important fact which justifies the above definition is that this geometrical object naturally defines a local BV system provided certain regularity condition is imposed on $\omega$. The crucial step in the construction is the passage to the super-jet bundle $SJ^\infty(E)$. This can be related  with the jet-bundle of another fiber bundle $\bar E$ whose typical fiber $\bar{F}$ is the space of super-maps from $T_x[1] X$ to the typical fiber $F$ of $E$. More precisely, $J^\infty(\bar E)$ can be understood as a pullback of $SJ^\infty(E) \to T[1]X$ to the zero section of $T[1]X$. Note that $\bar{F}$ is finite-dimensional provided $F$ is. General discussion of super jet-bundles can be found~\cite{Khudaverdian:2001qe}, see also~\cite{Grigoriev:2019ojp,Grigoriev:2022zlq} for precisely the present setup.

\rchanged{The prolongation of $Q$ to $SJ^\infty(E)$ restricts to $J^\infty(\bar E)\subset SJ^\infty(E)$ and is a vertical evolutionary vector field $s$. The presymplectic structure $\omega$ on $E$ induces a 2-form
on each fiber of $\bar E$. Indeed, for a given $x\in X$ let us
consider the fiber $E_x$ of $E$ over $x$ (here we treat $E$ as a bundle over $X$). $E_x$ itself is a bundle over $T_x[1]X$. The fiber $\bar E_x$ of $\bar E$ at $x$ is by definition the space of supersections of $E_x$ over $T_x[1]X$. Furthermore, restriction of $\omega$ to $E_x$ gives rise to a 2 form on $\bar E_x$ via the standard construction which is~\eqref{indomega} restricted to $T_x[1]E$. If this 2-form is regular and its rank does not depend on $x$ we say that $\omega$ is quasi-regular. Note that nonregular $\omega$ may be quasi-regular and this is precisely what happens in applications. As we are going to see if $\omega$ is quasi-regular, one can take a symplectic quotient (at least locally), resulting in a local BV system.} The construction is a more refined version of that originally presented in~\cite{Grigoriev:2022zlq}. 

Having explained the idea of the construction let us formulate it as a theorem and give a proof:
\rchanged{\begin{theorem}
    Let $(E,T[1]X, Q, \omega)$ be a weak presymplectic gPDE. If $\omega$ is quasi-regular this data determines a local BV system, at least locally.
\end{theorem}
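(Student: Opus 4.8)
The plan is to reduce the statement to Proposition~\bref{prop:p-BV-quotient} by producing, from the weak presymplectic gPDE data on $E\to T[1]X$, an honest presymplectic local BV system in the sense of Definition~\bref{def:p-BV} on a suitable fiber bundle over $X$, and then invoking the regularity (quasi-regularity) hypothesis to take the symplectic quotient. First I would make precise the passage to the super-jet bundle: set $\bar E\to X$ to be the bundle whose fiber $\bar E_x$ is the space of supersections of $E_x\to T_x[1]X$, so that $J^\infty(\bar E)$ sits inside $SJ^\infty(E)$ as the pullback to the zero section of $T[1]X$, as described above. The prolongation of $Q$ to $SJ^\infty(E)$ restricts to a degree-$1$ vertical evolutionary vector field $s$ on $J^\infty(\bar E)$; this uses the compatibility $Q\circ\pi^*=\pi^*\circ\dx$, which guarantees $Q$ is tangent to the relevant locus. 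I would then transport $\omega$ to an $(n,2)$-form $\st{n}{\omega}$ on $J^\infty(\bar E)$ by the fiber-integration construction~\eqref{indomega} restricted fiberwise over each $x$, i.e. the same recipe that in the AKSZ case yields $\bar\omega$; concretely, in the coordinates $\Psi(x|\theta)$ this is $\int d^n\theta\,\omega_{AB}(\Psi)\delta\Psi^A\delta\Psi^B$, which manifestly descends from a closed $n+2$-form $\omega^{\bar E}$ on $\bar E$.

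Next I would translate the three axioms~\eqref{wgPDE} into the three conditions of Definition~\bref{def:p-BV}. The closedness $d\omega=0$ gives $d\omega^{\bar E}=0$. For the second axiom, $i_Q\omega+d\cL\in\cI$: under the dictionary of Section~\bref{sec:forms-gradgeom} the ideal $\cI$ is exactly what gets killed when passing to vertical forms, while the operator $L_D$ (resp.\ $\dh$) records the horizontal part; fiber-integrating $i_Q\omega+d\cL\in\cI$ over $T_x[1]X$ should produce precisely $i_s\st{n}{\omega}+\dv\st{n}{\cL}=\dh(\cdot)$, with $\st{n}{\cL}$ the fiber integral of $\cL$. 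Similarly, the weak master equation $\tfrac12 i_Qi_Q\omega+Q\cL=0$ fiber-integrates to $\tfrac12 i_si_s\st{n}{\omega}=\dh(\cdot)$ — the ``$Q\cL$'' term becomes an $s$-derivative of $\st{n}{\cL}$, which under fiber integration over $T[1]X$ contributes only a $\dh$-exact term (integration by parts in $\theta$), yielding the presymplectic master equation~\eqref{p-master}. The verticality of the kernel $\cK$ of $\omega^{\bar E}$ inside $V\bar E$ needs to be checked: it follows because $\omega$ on $E$ pairs only the fiber differentials $d\psi^A$ (it lies in $\bigwedge^{2}$ of the fiber directions modulo $\cI$), so the induced form on $\bar E_x$ is built from $\delta\Psi^A$ only and its kernel is tangent to the fibers of $\bar E\to X$.

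Once the presymplectic local BV system is in hand, the final step is to feed the quasi-regularity hypothesis into Proposition~\bref{prop:p-BV-quotient}: quasi-regularity of $\omega$ is by definition exactly the statement that the induced 2-form on each $\bar E_x$ is regular with $x$-independent rank, i.e.\ that the kernel distribution of $\omega^{\bar E}$ is regular, which is the extra hypothesis of that proposition. Applying it, at least locally one obtains a genuine local BV system on $J^\infty(\cG)$ where $\cG=\bar E/\cK$, with nilpotent evolutionary $s'$, invertible $(n,2)$-form, and BV Lagrangian $\st{n}{\cL}$ — and this is the claimed BV system. I would close by remarking that the action functional~\eqref{pgPDE-action} on sections of $E$ matches the BV action $\int \hat\sigma_{pr}^*(\st{n}{\cL})$ of the resulting system on the quotient, since the kernel coordinates are passive.

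The main obstacle I expect is the careful bookkeeping in the second paragraph: verifying that fiber integration over $T_x[1]X$ intertwines the operators on $E$ (the de Rham $d$ split via $\cI$ and $L_D$) with the operators on $J^\infty(\bar E)$ (namely $\dv$ and $\dh$), so that membership in $\cI$ becomes $\dh$-exactness and the ``$Q\cL$'' term produces exactly the $\dh(\cdot)$ in~\eqref{p-master} with no anomalous contributions. This is where one must be scrupulous about signs, about the fact that $\omega$ (being a pullback of a form on $E$, hence independent of higher jets) has the Euler projector acting trivially on it — as exploited in Lemma~\bref{prop:nilp-mod-kernel} — and about the distinction between the non-evolutionary projected field $\tilde s$ and the evolutionary $s'$ that agrees with it on zero jets. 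Everything else (closedness, verticality of the kernel, the final appeal to Proposition~\bref{prop:p-BV-quotient}) is comparatively routine given the machinery already assembled.
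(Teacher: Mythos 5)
Your overall architecture is the paper's: pass to the super-jet bundle / the bundle $\bar E$ of supersections, induce a vertical evolutionary $s$ and an $(n,2)$-form, verify the axioms of a presymplectic local BV system (Definition~\bref{def:p-BV}), and then feed quasi-regularity into Proposition~\bref{prop:p-BV-quotient}. The gap sits in the central dictionary of your second paragraph, which you yourself defer as ``the main obstacle'' but then commit to in a form that is false. The induced BV Lagrangian is \emph{not} the fiber integral of $\cL$ alone: after decomposing the prolonged vector field as $\bar Q = D+s$ with $D=\mathrm{tot}(\dx)$ and $s$ evolutionary, the condition $i_{\bar Q}\bar\omega+d\bar\cL\in\cI$ becomes a Hamiltonian condition for $s$ only after rewriting $i_D d\bar\chi=L_D\bar\chi+d\,i_D\bar\chi$, and this step simultaneously produces the $\dh$-exact right-hand side and shifts the Hamiltonian to $\Upsilon(\cL)+\Upsilon(i_D\bar\chi)$, i.e.\ the full AKSZ-type integrand including the kinetic piece $\sigma^*(\chi)(\dx)$. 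With your choice $\st{n}{\cL}=$ (fiber integral of $\cL$), the claimed identity $i_s\st{n}{\omega}+\dv\st{n}{\cL}=\dh(\cdot)$ fails: in the $p$-form example the Hamiltonian of $s$ is the full first-order density appearing in~\eqref{actionPN}, not just $-\tfrac14 \bF(\star\bF)$, and the discrepancy $\dv\,\Upsilon(i_D\bar\chi)$ is not $\dh$-exact. The same omission undermines your closing remark identifying the BV action with $\int\hat\sigma^*_{pr}(\st{n}{\cL})$ for that $\st{n}{\cL}$.

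The verification of the pre-master equation is likewise not correctly sketched: $Q\cL$ does not ``become an $s$-derivative of $\st{n}{\cL}$ which is $\dh$-exact by integration by parts in $\theta$'' (in general $s\st{n}{\cL}$ is not $\dh$-exact). The mechanism that works is to expand $i_si_s\bar\omega=i_{\bar Q}i_{\bar Q}\bar\omega-2i_{\bar Q}i_D\bar\omega+i_Di_D\bar\omega$, use the third axiom of~\eqref{wgPDE} to evaluate $i_D\alpha$ for $\alpha=i_{\bar Q}\bar\omega+d\bar\cL\in\cI$, and observe the cancellations leading to $\tfrac12 i_si_s\bar\omega=D(i_D\bar\chi+\bar\cL)$, whose top $\theta$-component is $\dh$ of the degree-$(n-1)$ part of the \emph{shifted} Lagrangian. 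So the heart of the theorem --- establishing axioms (ii) and (iii) of Definition~\bref{def:p-BV} for the induced data --- is missing, and the specific identifications you do state would not go through. The surrounding steps (verticality of the kernel via the $(dx)^n$ factor, quasi-regularity supplying the regularity hypothesis of Proposition~\bref{prop:p-BV-quotient}, and the local conclusion) are correct and coincide with the paper's.
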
}
\begin{proof}
Given a weak presymplectic gPDE $(E,T[1]X, Q, \omega)$, the system \newline $(SJ^\infty(E), T[1]X, \Bar{Q}, \Bar{\omega}, \Bar{\cL})$ is also a weak presymplectic gauge PDE. Here $\Bar{Q}$ is a unique 
prolongation of $Q$ and $\bar\omega =(\pi^\infty_E)^*\omega, \bar\cL= (\pi^\infty_E)^*\cL$, where $\pi^\infty_E$
is a canonical projection $SJ^\infty(E)\to E$. It follows, that conditions \eqref{wgPDE} imply:
\begin{equation} \label{weak_cond}
    d\Bar{\omega} = 0, \qquad i_{\Bar{Q}}\Bar{\omega} + d\Bar{\cL} \in \cI, \qquad \frac{1}{2}i_{\Bar{Q}}i_{\Bar{Q}}\Bar{\omega} + \Bar{Q}\Bar{\mathcal{L}} = 0 \,,
\end{equation}
where \rchanged{by abuse} of notations
we denote by $\cI$ the ideal in $\bigwedge^\bullet(SJ^\infty(E))$, generated by forms of positive degree on the base $T[1]X$.

First two conditions from \eqref{weak_cond} imply $\Bar{\omega} = d\Bar{\chi}$ and 
\begin{equation} \label{prolonged_main}
    i_{\Bar{Q}}\Bar{\omega} + d\Bar{\mathcal{L}} + \mathcal{I} = 0\,.
\end{equation}
Since $Q$ projects to $\dx$ the prolonged vector field $\Bar{Q}$ can be decomposed as $\Bar{Q} = D + s$ with $D = \textrm{tot}(d_X)$ and $s$ evolutionary.  
We have $i_D d\Bar{\chi} = L_D\Bar{\chi} + d i_D\Bar{\chi}$ 
and hence \eqref{prolonged_main} can be rewritten as
\begin{equation}
    i_s \Bar{\omega} = -d(\bar\cL + i_D\Bar{\chi}) - L_D(\Bar{\chi}) + \mathcal{I} \,.
\end{equation}
Now we recall that on a jet-bundle the projection to the vertical part is well-defined. Taking the vertical part of the above equation we get:
\begin{equation} \label{vert}
    i_s \Bar{\omega}^v = -\dv(\bar\cL + i_D\Bar{\chi}) - L_D(\Bar{\chi^v})\,.
\end{equation}

As it is clear from the structure of~\eqref{vert}, it has a chance to encode a condition that $s$ is the Hamiltonian vector field, which is one of the axioms of a presymplectic local BV system. Another axiom is the pre-master equation. To obtain it consider the following expression:
\begin{equation} \label{i_si_s}
    i_si_s\Bar{\omega} = i_{\Bar{Q}-D}i_{\Bar{Q}-D}\Bar{\omega} = i_{\Bar{Q}}i_{\Bar{Q}} \Bar{\omega} - 2i_{\Bar{Q}}i_D\Bar{\omega} + i_Di_D\Bar{\omega} \,.
\end{equation}
Let us rewrite \eqref{prolonged_main} as $i_{\Bar{Q}}\Bar{\omega} + d\Bar{\mathcal{L}} + \alpha = 0, \alpha \in \mathcal{I}$. Applying $i_{\Bar{Q}}$ on it we get $i_{\Bar{Q}}i_{\Bar{Q}}\Bar{\omega} + Q\Bar{\mathcal{L}} + i_{\Bar{Q}}\alpha = 0$. Since $\alpha \in \mathcal{I}$ is a 1-form $i_{\Bar{Q}}\alpha = i_D\alpha$. Making use of the third equation in \eqref{weak_cond} we get $i_D\alpha = -\frac{1}{2}i_{\Bar{Q}}i_{\Bar{Q}} \Bar{\omega}$ Then the second term on the r.h.s. of \eqref{i_si_s} is $2i_D(d\Bar{\mathcal{L}} + \alpha)= 2D(\Bar{\mathcal{L}}) + 2i_D\alpha = 2D(\Bar{\mathcal{L}}) - i_{\Bar{Q}}i_{\Bar{Q}} \Bar{\omega}$ which cancels the first term in \eqref{i_si_s}. The third term is $i_Di_Dd\Bar{\chi} = i_DL_D\Bar{\chi} + i_D d i_D \Bar{\chi} = 2D(i_D\Bar{\chi})$. 
So all in all we have:
\begin{equation} \label{p-master-eq}
    \half i_s i_s\Bar{\omega} = D(i_D\Bar{\chi} + \mathcal{L})\,.
\end{equation}

Let us introduce a special coordinate system on $SJ^\infty(E)$, where the total derivative with respect to $\theta$ is simply $D^\theta_a=\dl{\theta^a}$, see Appendix~\bref{app:jets} for more details. 
In this coordinate system evolutionary vector fields commute with $\dl{\theta^a}$ and hence we can expand our equations in powers of  $\theta^a$, i.e. in horizontal form degree if we switch to the language of differential forms on $J^\infty(\bar{E})$. Introduce notations $\st{k}{\gtj\omega}$, $\st{k}{\gtj\chi}$, $\st{k}{\gtj\cL}$ for the horizontal form degree $k$ components of $\Upsilon({\bar\omega}{}^v)$, $\Upsilon({\bar\chi}{}^v)$, $\Upsilon({\bar \cL}{}^v)$, respectively. 
In terms of the differential forms on $J^\infty(\bar{E})$ the horizontal form-degree $n$ components of equations \eqref{vert} and \eqref{p-master-eq} read as
\begin{equation}
    i_s \st{n}{\gtj{\omega}}+\dv \st{n}{\gtj{\cL}}_{BV} = -\dh \st{n-1}{\gtj{\chi}}\,.
\end{equation}
where $\gtj{\cL}_{BV}=\Upsilon(\cL) + \Upsilon(i_D\Bar{\chi})$ and
\begin{equation}
 \half i_s i_s\st{n}{\gtj\omega} = \dh(\st{n-1}{\tilde\cL_{BV}})\,.
\end{equation}
These are precisely the defining relation of a presymplectic local BV system on $J^\infty(\bar{E})$.

Finally, $\st{n}{\gtj\omega}$ is a pullback to $J^\infty(\bar{E})$ of an $(n+2)$-form on $\bar{E}$ and because $\st{n}{\gtj\omega}$ is proportional to the volume form $(dx)^n$, its kernel distribution is vertical. \rchanged{The quasi-regularity of $\omega$ means that $\st{n}{\gtj\omega}$ is regular, therefore all the conditions from Definition~\bref{def:p-BV} are satisfied and hence Proposition~\bref{prop:p-BV-quotient} implies that, at least locally, a weak presymplectic gPDE determines a local BV system.}
\end{proof}
%%%%%%%%%%%%%%%%%%%%%%%%%%%%%%%%%%%%%%%%%%%%%%%%%%%%%%%%%%%%%%%%%%%%%%%%
%%%%%%%%%%%%%%%%%%%%%%%%%%%%%%%%%%%%%%%%%%%%%%%%%%%%%%%%%%%%%%%%%%%%%%%%

\subsection{Weak presymplectic PDEs and multisymplectic systems}

Before switching to the examples of weak presymplectic gPDEs associated to various gauge field theories let us consider the special case, where  $Q$ does not encode any gauge transformations. More specifically, consider a presymplectic gPDE $(E,Q,T[1]X,\omega)$ such that  the only nonvanishing degree variables are $\theta^a$, $\gh{\theta^a}=1$ or, in other words, $\fZ$-degree is horizontal. This means that the underlying gPDE is actually a usual PDE. Indeed,  in this case $\cC^{\infty}(E)$ can be identified with the algebra of horizontal differential forms on the underlying bundle $\gtj{E}$ while $Q$ corresponds to the horizontal differential $\dh$ on $\gtj{E}$, cf.~\eqref{isom2}. Indeed, by the degree reasoning $Q$ is linear in $\theta^a$ and can be seen as a flat covariant differential on $\gtj{E} \to X$ so that $(\gtj{E},\dh)$ is a PDE defined in the geometrical way, see e.g.~\cite{Krasil'shchik:2010ij}. More detailed discussion of the above can be found in~\cite{Grigoriev:2019ojp}.

In this case there are no nonvanishing gauge parameters (=vertical vector fields on $E$ of degree $-1$) and hence the underlying $(E,Q,T[1]X,\omega)$ does not encode nontrivial gauge (as well as gauge for gauge) transformations.  Note that it does not mean that the underlying PDE does not admit gauge symmetries, it only means that there are no gauge symmetries accounted by $Q$. In this setup $gh(\symp)=\dmsn-1$ implies  that in local coordinates $x^a,\theta^a,\psi^A$ the expression for $\symp$ has the following structure  
\begin{equation}
\symp=\symp_{AB}^ad\psi^Ad\psi^B(\theta)^{(\dmsn-1)}_a\,.
\end{equation}

Let us now reformulate the system in terms of the differential forms on $\gtj{E}$ which is a pullback of $E$ to the zero section of $T[1]X$. Alternatively, $\gtj{E}$ can be also understood as a submanifold in $E$ singled out by $\theta^a=0$. Using the map $\Upsilon$ defined in~\eqref{isom} from forms on $E$ to forms on $\gtj{E}$ we define
\begin{equation}
\begin{gathered}
\gtj{\omega}=\Upsilon(\omega)\subset \bigwedge\nolimits^{n-1,2}(\gtj{E})\,,\qquad  
\gtj{\chi}=\Upsilon(\chi)\subset \bigwedge\nolimits^{n-1,1}(\gtj{E})\,,\\
\gtj{\cL}=\Upsilon(\cL)\subset \bigwedge\nolimits^{n,0}(\gtj{E})\,.
\end{gathered}
\end{equation}
Using the properties~\eqref{isom2} of $\Upsilon$ and $L_Q\omega \subset \cI$ one then finds:
\begin{equation}
\label{intrinsic-gPDE}
\dv\gtj{\symp}=0\,,\qquad\dh\gtj{\symp}=0\,, \qquad \gtj{\symp}=\dv \gtj{\chi}\,.
\end{equation}

In this way we have observed that $\gtj{\symp}$ can be indeed identified with a compatible presymplectic current~\cite{Kijowski:1973gi,Kijowski:1979dj,Crnkovic:1986ex,Lee:1990nz}\footnote{We use the term presymplectic current to indicate that this is an $n-1$ horizontal and $2$-vertical form on the equation manifold.} on $\gtj{E}$, i.e. we are indeed dealing with a PDE equipped with a compatible presymplectic current. Because cohomology of $\dv$ can only originates from the global geometry of the fibres, at least locally there  exists \rchanged{$(n,0)$-form
$\gtj{l}$}
such that $\gtj{\omega}=d(\gtj{\chi}+\gtj{l})$ and there is a natural action functional on the space of sections of $\gtj E$:
\begin{equation}
\label{intrinsic}
S^C[\sect^*]=\int\sect^*(\gtj{\chi}+\gtj{l})\,.
\end{equation}
\rchanged{Of course, the functional is also defined locally, in general.} In the PDE geometry setup this action was proposed in~\cite{Grigoriev:2016wmk,Grigoriev:2021wgw} as a way to encode the Lagrangian formulation in terms of the geometry of the equation manifold, see also~\cite{Khavkine2012,Druzhkov:2021}. Note that the action functionals analogous to \eqref{intrinsic} are well known in the context of multisymplectic systems, see e.g.~\cite{Gotay:1997eg,Bridges2009,Rom_n_Roy_2009} and the discussion below.

Let us now show that~\eqref{intrinsic} is in fact the same as the AKSZ-like action~\eqref{pgPDE-action}. To see this
recall $i_Q\omega+d \cL\in\cI$ can be rewritten as  $L_Q\chi+d(\cL+i_Q\chi)\in \cI$, giving
\begin{equation}
\label{covHam1}
    \dh\gtj{\chi} +\dv(\Upsilon(i_Q\chi)+\gtj{\cL})=0\,,
\end{equation}
upon applying $\Upsilon$ to the both sides. It follows, $\gtj{\omega}=d(\gtj{\chi}+\gtj{l}^\prime)$, where $\gtj{l}^\prime=\Upsilon(i_Q\chi)+\gtj{\cL}$, so that one can assume $\gtj{l}^\prime=\gtj{l}$  and we have recovered all the ingredients of the action~\eqref{intrinsic} from the presymplectic gPDE data.

Finally, introducing the coordinates $x^a,\theta^a,\psi^A$ on $E$, action~\eqref{pgPDE-action} can be written as:
\begin{multline}
S^{gPDE}[\sect^*]=\int_{T[1]X}(\sect^*({\chi}_A)\dx \sigma^*(\psi^A)+\sect^*(l-{\chi}_A Q\psi^A))=\\
\int_{X}(\sect^*(\gtj{\chi}_A) d \sigma^*(\psi^A)+\sigma^*(\gtj{l})-\sect^*(\gtj{\chi}_A\dh \psi^A))\,,
\end{multline}
where $l \in \cC^{\infty}(E)$ is a unique function such that  $\Upsilon (l)=\gtj{l}$ and by some abuse of notations we use $\sigma$ to denote a section
$X \to E$ and its unique extension to a section $T[1]X \to E$.  At the same time, in terms of the coordinates $x^a,\theta^a,\psi^A$,  the action~\eqref{intrinsic} takes the following form 
\begin{multline}
S^C[\sect^*]=\int_X \sect^*(\gtj{\chi}_A\dv\psi^A+ \gtj{l})=\int_X\sect^*(\gtj{\chi}_A(d-\dh)\psi^A+\gtj{l})=\\
=\int_X \sect^*(\gtj{\chi}_Ad\psi^A)+\sect^*(\gtj{l}- \gtj{\chi}_A\dh\psi^A))\,,
\end{multline}
so that indeed the actions coincide.

Let us also consider a weak presymplectic gauge PDE $(E,Q,T[1]X,\omega)$ such that $\gh{\psi^A}=0$ and see what it corresponds to in terms of the geometry of the underlying fiber bundle $\gtj{E}$.  In this case $Q$ is still linear in $\theta^a$ but $Q^2$ is generally nonvanishing. From the point of view of $\gtj{E}$ it is a covariant differential, which in contrast to the case considered above, is generally non-flat.  Note that the third condition \eqref{wgPDE} is trivially satisfied because $\gh{i_Qi_Q\omega}=\gh{Q\cL}=n+1$ and hence does not impose any constraints on $Q^2$. Nevertheless, all the above reformulations of the action can be repeated. Namely, the action~\eqref{pgPDE-action} can be rewritten as
\begin{equation}
S[\sigma]=\int_X \sigma^*(\Theta)\,, \qquad \Theta=\Upsilon(\chi+\cL+i_Q\chi)\,.
\end{equation}
Note that $\Theta$ is an $n$-form on $\gtj{E}$.

From the above discussion it is clear that we have arrived at a multisymplectic system, provided $d\Theta$ is nondegenerate. Recall, that a multisymplectic system is a bundle $\gtj{E}\to X$ equipped with an $n$-form $\Theta$, $n=\dim{X}$, such that $\Omega=d\Theta$ vanishes on any 3 vertical vectors, see e.g.~\rchanged{\cite{Gotay:1997eg, BRIDGES_1997,Hydon1627,deLeon:2005hu,Bridges2009,Gaset:2022ema}}\footnote{\rchanged{Note that there are various versions of the definition. For instance the multisymplectic bundle as defined in~\cite{Gotay:1997eg} contains the above multisymplectic system as a subbundle defined by the Legendre transform.}}. The last condition is automatically satisfied in our case because $\Theta$ originates from a sum of 1 and 0-forms on $E$. Further details on multisymplectic systems can be found in e.g.~\cite{Kijowski:1979dj,Aldaya:1980zz,Gotay:1997eg,Roman-Roy:2005vwe,Gaset:2022ema}.

It is important to stress, however, that the multisymplectic system determined by a weak presymplectic gPDE comes equipped with the additional structure, the Ehresmann connection on $\gtj{E}$ encoded in $Q$, which is not necessarily flat. This system can be seen as the quotient of the respective PDE by the maximal regular subdistribution of the vertical kernel distribution of $\gtj{\omega}$. The differential induced by $\dh$ on the quotient generally fails to be nilpotent and hence its associated connection is not flat, in general.

%%%%%%%%%%%%%%%%%%%%%%%%%%%%%%%%%%%%%%%%%%%%%%%%%%%%%%%%%%%%%%%%%%%%%%%%
%%%%%%%%%%%%%%%%%%%%%%%%%%%%%%%%%%%%%%%%%%%%%%%%%%%%%%%%%%%%%%%%%%%%%%%%

\section{Examples}
\label{sec:examples}
\subsection{$p$-forms}

The gauge field of the $p$-form theory is a spacetime $p$-form $A$ subject to the gauge transformation $\delta A=dB$ and the Lagrangian of the form $(dA)^2$.
For $p>1$ this is a reducible gauge theory and its proper BV formulation involves $p$ generation of ghosts (for ghosts) together with their conjugate antifields.
It turns out that the complete BV formulation of this system arises from a rather concise weak presymplectic gPDE. In this section we use notation $T^{(p)}$ to denote a totally antisymmetric tensor $T^{a_1...a_n}$ and $T^{(k)}\cdot R_{(k)}$ to denote a contraction of such tensors.

Consider a trivial bundle $E \to T[1]X$ with $X$ being the $n$-dimenisonal Minskowski space and the fiber being the space with coordinates $C$, $\gh{C}=p$ and  $F^{(p+1)}$, $\gh{F^{(p+1)}}=0$.  It is convenient to introduce ``generating function'' $\bF=\frac{1}{(p+1)!}F_{a_1\ldots a_{p+1}} \theta^{a_1}\ldots \theta^{a_{p+1}}$ in terms of which the $Q$-structure is defined as:
\begin{equation}
QC=\bF \,, \qquad 
QF^{(p+1)}=0\rchanged{=Q\bF}\,.
\end{equation}
The presymplectic structure is given by 
\begin{equation}
\symp=d\chi\,, \qquad \chi=\frac{1}{2}(\star \bF)dC\,,
\end{equation}
where $\star$ denotes the usual Hodge conjugation of functions in $\theta^a$ seen as exterior forms, i.e. $(\star \bF)=\frac{1}{(n-p-1)!}\epsilon_{(n)}\cdot (F^{(p+1)}\theta^{(n-p-1)})$. 

It is easy to check that the axioms are indeed fulfilled and $\cL$ defined through
$i_Q\symp+d\cL\in \mathcal{I}$ can be taken as
\begin{equation}
\cL=-\frac{1}{4}\bF (\star \bF)\,.
\end{equation}
Note that in this case $Q^2=0=Q\cL$.

Introducing fields $A_{(p)}(x)$ and ${\mathrm F}_{(p+1)}(x)$ to parameterize sections so that $\sigma^*(C)=A_{(p)}\cdot \theta^{(p)}$ and $\sigma^*(F_{(p+1)})=\mathrm{F}_{(p+1)}$ the action functional \eqref{pgPDE-action} takes the form
\begin{multline}
\label{actionPN}
S[A_{(p)},\mathrm{F}^{(p+1)}]
=\int_{T[1]X} (\dx \bA)(\star \bF)-\frac{1}{4}\bF(\star \bF)\\
=\int d^\dmsn x(\frac{1}{2}\d_{[(1)}A_{(p)]}\cdot \mathrm{F}^{(p+1)}-\frac{1}{4}\mathrm{F}^{(p+1)}\cdot \mathrm{F}_{(p+1)})\,.
\end{multline}
where we also introduced the following generating function $\bA=\frac{1}{p!}A_{(p)}\theta^{(p)}$. This action is equivalent to the standard $(dA)^2$ action through the elimination of the auxiliary field $\mathrm{F}_{(p+1)}$.

It is instructive to see how the formalism also generates the BV action for this model. For simplicity let us restrict to the simplest case $p=2$, $\dmsn=4$. The component expression for the symplectic structure is given by
\begin{equation}
\symp=\frac{1}{2}\epsilon_{abcd}(dF^{abc}\theta^d)dC\,.
\end{equation}
Coordinates on the fibers of $\gtj{E}$ are introduced as components of supersection $\hat\sigma:T_x[1]X\to E$:
\begin{equation}
\begin{aligned}
\Hat{\sect}^*(C)&=\st{0}{C}+\st{1}{C}_a\theta^a+\frac{1}{2}A_{ab}\theta^a\theta^b+\frac{1}{6}\st{3}{C}_{abc}\theta^a\theta^b\theta^c+\ldots\,, \\
\Hat{\sect}^*(F^{abc})&=F^{abc}+\st{1}{F}{}^{abc}{}_d\theta^d+\frac{1}{2}\st{2}{F}{}^{abc}{}_{de}\theta^d\theta^e+\frac{1}{6}\st{3}{F}{}^{abc}{}_{def}\theta^d\theta^e\theta^f+\ldots\,.
\end{aligned}
\end{equation}

Presymplectic structure on $\gtj{E}$ has the following form:
\begin{equation}
\label{symp2form}
\st{4}{\gtj{\symp}}=\frac{1}{2}(dCdC^*-dC_adC^{*}{}^{a}+dA_{ab}dA^*{}^{ab}-dF^*_{abc}dF^{abc})(dx)^4\,,
\end{equation}
where $C=\st{0}{C}$, $C^*=\st{3}{F}{}^{abc}{}_{abc}$, $C_a=\st{1}{C}_a$, $C^{*}{}^{a}=3\st{2}{F}{}^{abc}{}_{bc}$, $A^*{}^{ab}=3\st{1}{F}{}^{abc}{}_c$, $F^*_{abc}=\st{3}{C}_{abc}$ and we have switched to the language of differential forms on $\gtj{E}$. 

It follows that we have indeed recovered the fiber bundle $\gtj{E}$ underlying the standard BV formulation of the $p$-form theory, see e.g.~\cite{HT-book}. Indeed,
the symplectic structure is canonical and the spectrum of ghost degrees is summarized in the following table:
\begin{center}
\begin{tabular}{ |c|c|c|c|c|c|c|c|c|c| } 
 \hline
 field & $A_{ab}$ & $A^*{}^{ab}$ &$F^{abc}$&$F^*_{abc}$&$C_a$&$C^{*}{}^{a}$&$C$&$C^*$ \\ 
 \hline
 $gh(\cdot)$ & $0$ & $-1$ &$0$&$-1$&$1$&$-2$&$2$&$-3$ \\ 
 \hline
\end{tabular}
\end{center}
All the remaining variables are in the kernel of $\st{4}{\gtj{\symp}}$ and are factored out. Finally, in the above coordinates the BV-AKSZ action~\eqref{AKSZ-action} is given by  
\begin{multline}
S_{BV}=\int d^4 x(\frac{1}{2}F^{abc}(\d_aA_{bc}+\d_bA_{ca}+\d_cA_{ab})-\frac{1}{4}F^{abc}F_{abc}+\\
+\frac{1}{2}A^*{}^{ab}(\d_aC_b-\d_bC_a)+\frac{1}{2}C^*_a\d^aC)
\end{multline}
and is indeed a standard BV action of the $p$-form theory.

\subsection{Freedman-Townsend model}
In this section we present a weak presymplectic gPDE formulation of the Freedman-Townsend model~\cite{Freedman:1980us}. Its first-order action has the following form:
\begin{equation}
\label{actionFT}
S[B^{ab},A_a]=\int d^4 xTr(-\frac{1}{4}\epsilon_{abcd}B^{ab}(\d^cA^d-\d^dA^c+[A^c,A^d])+\frac{1}{4}A^aA_a)\,.
\end{equation}
Its gauge transformations can be written as
\begin{equation}
\delta B^{ab}=\d^a\lambda^b-\d^b\lambda^a+[A^a,\lambda^b]-[A^b,\lambda^a]\,, \qquad 
\delta A^a = 0\,.
\end{equation}

To see that \eqref{actionFT} is indeed a consistent deformation of the $2$-form theory discussed in the previous Section, one can use the following field redefinition: $A_a=\epsilon_{abcd}F^{bcd}$. Keeping only quadratic terms one indeed recovers \eqref{actionPN} with $p=2$, $\dmsn=4$.

The weak presymplectic gPDE formulation of this model is constructed as follows. The fiber of the underlying bundle $E \to T[1]X$ is the space with matrix-valued coordinates $C$, $gh(C)=2$ and $A_a$, $gh(A_a)=0$. Presymplectic structure and $Q$-structure are given by:
\begin{equation}
\symp = d\chi\,, \qquad \chi= -\frac{1}{2}Tr(C dA_a\theta^a)\,.
\end{equation}
\begin{align}
QC &= -\frac{1}{6}\epsilon_{abcd}A^a\theta^b\theta^c\theta^d - 2[C,A_a]\theta^a \,, \\
QA_a &= -[A_a,A_b]\theta^b\,.
\end{align}

Hamiltonian $\cL$ defined defined by $i_Q \omega+d\cL\in \cI$ can be taken as 
\begin{equation}
\cL=\frac{1}{2}Tr(\frac{1}{2}A^aA_a(\theta)^{4}+C[A_a,A_b]\theta^a\theta^b) \,.
\end{equation}
It is easy to check that the remaining conditions are satisfied
$i_Qi_Q\symp=Q\mathcal{L}=0$.

Introducing coordinates on the fibers of $\gtj{E}$ via:
\begin{equation}
\begin{aligned}
\Hat{\sect}^*(C)&=\st{0}{C}+\st{1}{C}_a\theta^a+\frac{1}{2}B_{ab}\theta^a\theta^b+\frac{1}{6}\st{3}{C}_{abc}\theta^a\theta^b\theta^c+\ldots\,,\\
\Hat{\sect}^*(A^a)&=A^a+\st{1}{A}{}^a{}_b\theta^b+\frac{1}{2}\st{2}{A}{}^a{}_{bc}\theta^b\theta^c+\frac{1}{6}\st{3}{A}{}^a{}_{bcd}\theta^b\theta^c\theta^d+\ldots\,,
\end{aligned}
\end{equation}
the symplectic structure $\st{4}{\gtj{\omega}}$ on $\gtj{E}$ takes the form:
\begin{equation}
\st{4}{\gtj{\symp}}=\frac{1}{2}\mathrm{Tr}\,(\frac{1}{6}dCdC^*+\frac{1}{2}dC^adC^*_a+\frac{1}{6}dA^*_adA^a+\frac{1}{4}dB^{ab}dB^*_{ab})(dx)^4 \,,
\end{equation}
where now $C=\st{0}{C}$, $C^*=\epsilon^{abcd}\st{3}{A}_{abcd}$, $C^a=\st{1}{C}{}^a$, $C^*_a=\epsilon_{abcd}\st{2}{A}{}^{bcd}$, $A^*_a=\epsilon_{abcd}\st{3}{C}{}^{bcd}$ and $B^*_{ab}=\st{1}{A}{}_{ab}$. It is of course not surprising that this is just a matrix version of~\eqref{symp2form} because the interaction do not affect the symplectic structure.

Finally, the BV-AKSZ action~\eqref{AKSZ-action} is given by  
\begin{multline}
S_{BV}=\int d^4 x \mathrm{Tr}(-\frac{1}{4}\epsilon_{abcd}B^{ab}(\d^cA^d-\d^dA^c+[A^c,A^d])+\frac{1}{4}A^aA_a -\\
-\frac{1}{4}\epsilon^{abcd}C_a\d_{[b}B^*_{cd]}-\frac{1}{4}C\d^aC^*_a)\,,
\end{multline}
so that we have indeed recovered the standard BV formulation of the Freedman-Townsend theory~\cite{Batlle:1988rd}.

\subsection{Chiral Yang-Mills theory}

The so called chiral and (anti-)selfdual theories are extensively studied in the literature \cite{Chalmers:1996rq,Witten:2003nn,Abou-Zeid:2005zfo,Krasnov:2016emc,Sen:2019qit,Ponomarev:2022qkx,Herfray:2022prf,Basile:2022mif,Cattaneo:2023cnt,Hull:2023dgp,Sharapov:2021drr}. %\cite{Chattopadhyay:2022iwk}.
A simple  example of a chiral (selfdual) theory can be constructed starting from the  Yang-Mills theory. Let us first give an overview of YM formulations in terms of the (anti-)selfdual curvatures:
\begin{equation}
    (\star F^{\mp})_{ab}=\mp F^{\mp}_{ab} \,, \qquad (\star F)_{ab} = \half \epsilon_{abcd}F^{cd}\,.
\end{equation}
In Euclidean signature $\star^2 = 1$ and therefore has eigenvalues $\pm 1$.
To avoid complexification  we will stick to Euclidean signature for this section. Note that $F^\pm_{ab} = \half( F_{ab} \pm \star F_{ab})$.

The action of YM theory can be rewritten in terms of $F^{\pm}$ as:
\begin{multline} \label{YM-act}
    S_{YM}[A] = \int \tr ( F \wedge \star F) =\\=
    \int d^4x \tr ( F_{ab}F_{pk}) \epsilon^{abcd}\epsilon^{pk}{ }_{cd} = \int d^4x \tr ( (F^+)^2 + (F^-)^2)
\end{multline}
In its turn, the action of the topological YM theory can be written as:
\begin{equation} \label{tYM-act}
    S_{tYM}[A] = \int \tr(F\wedge F) = \int d^4x \tr( (F^+)^2 - (F^-)^2)\,.
\end{equation}
Taking the sum of the actions~\eqref{YM-act} and \eqref{tYM-act} one arrives at the equivalent (modulo boundary terms) action $\int (F^+)^2$ which involves only selfdual part of the curvature.  It can be further rewritten using an additional field $G_{ab}$ which is selfdual,
\begin{equation} \label{F^+^2}
    S_{cYM}[G,A] = \int d^4x \tr(F_{ab}G^{ab} + \frac{1}{2}G^{ab}G_{ab}).
\end{equation}
Integrating out $G$ gives $(F^+)^2$. Finally, the Chalmers--Siegel action~\eqref{C-S-action}, is obtained by omitting the $G^{ab}G_{ab}$ term in~\eqref{F^+^2}~\cite{Chalmers:1996rq}.

Both the chiral form of YM theory and the Chalmers-Siegel admit concise formulations in terms of weak presymplectic gPDEs. Let us start from
the graded fiber bundle $E\to T[1]X$ underlying the respective formulation of the standard YM theory~\cite{Grigoriev:2022zlq}:
\begin{equation}
\begin{gathered}
    x^a,\theta^a,\quad C\,,\quad \gh{C}=1\\
    G_{ab},\quad \gh{G_{ab}}=0\,.
\end{gathered}
\end{equation}
Here $G_{ab} = -G_{ba}$ and $G_{ab}$ and $C$ take values in some Lie algebra  $\mathfrak{g}$ which we assume semisimple. Let $Q$ act as:
\begin{equation}
\begin{gathered}
    Qx^a=\theta^a\,\\
    QC=-\half[C,C]  + \alpha (G_{ab})^+ \theta^a\theta^b  + \beta G_{ab}^- \theta^a\theta^b \,,
    \\
    QG_{ab}=[G_{ab},C]\,,
\end{gathered}
\end{equation}
where $\alpha, \beta \in \fR$ are some coefficients and $\commut{\cdot}{\cdot}$ denotes the graded commutator in $\algg$ tensored with $\cC^\infty(E)$. As a  presymplectic structure we take:
\begin{equation} 
    \omega = \tr(d(G_{ab}^+\theta^a\theta^b)dC + \gamma d(G_{ab}^-\theta^a \theta^b)dC)
\end{equation}
which is compatible with the differential:
\begin{multline}
    i_Q \omega = -d\cL + \cI = \tr(d(-G^+_{ab} \half[C,C]\theta^a \theta^b + \frac{\alpha}{2}G^+_{ab} (G^{ab})^+ \theta^4  - \\ -
    \gamma  G^-_{ab}\theta^a \theta^b \half [C,C]  + \frac{\beta \gamma}{2}G^-_{ab} (G^{ab})^-\theta^4)) + \cI
\end{multline}
Moreover, it is easy to check that $i_Qi_Q \omega = 0 = \rchanged{Q\cL}$ so that the axioms are fulfilled. 

Let us parameterize the space of sections by
\begin{equation}
    \begin{gathered}
        \sigma^*(C) = A_a(x) \theta^a \,,\\
        \sigma^*(G_{ab}) = G_{ab}(x) \,.
    \end{gathered}
\end{equation}
The corresponding action takes the following form:
\begin{multline} \label{gen-YM-act}
    S[G,A] = \tr \int_X ( (G^{ab})^+ + \gamma (G^{ab})^-)(F_{ab}(A)) - \\
    - \frac{\alpha}{2} (G^{ab})^+ G^+_{ab} -
        \frac{1}{2} (\beta \gamma) G^-_{ab} (G^{ab} )^-\,.
\end{multline}
Varying with respect to $G^{ab}$ one gets: 
\begin{equation} 
\begin{gathered}
    F^+_{ab}(A)  - \alpha G^+_{ab}  = 0 \\
    \gamma(F^-_{ab}(A)  - \beta G^-_{ab})  = 0 
\end{gathered}
\end{equation}

Let us see what does the above system describe at different values of parameters $\alpha, \beta, \gamma$:

1. $\gamma=1$, $\alpha = \beta$. The induced action is that of the  topological YM \eqref{tYM-act}.

2. $\gamma=-1$, $\alpha = \beta$. This reproduces the conventional weak presymplectic gPDE formulation~\cite{Dneprov:2022jyn,Grigoriev:2022zlq}, giving the usual action \eqref{YM-act}. In this case the symplectic structure can be written as~$\omega=\tr \, d(\epsilon_{abcd}G^{cd}\theta^a\theta^b)dC$.

3. $\alpha,\beta,\gamma$ generic. One can express $G_{ab}$ in terms of $F_{ab}(A)$. Then inserting $G_{ab}$ back into the action gives a sum of YM \eqref{YM-act} and topological YM \eqref{tYM-act} actions with some coefficients depending on $\alpha, \beta, \gamma$. 

4. $\gamma=0$, $\alpha,\beta$ -- generic, gives chiral formulation of YM theory, i.e. action~\eqref{F^+^2} up to a numerical factor. Note that in this case $\dl{G_{ab}^-}$ is in the kernel and can be set to $0$, resulting in the subbundle of $E$. In particular, coefficient $\beta$ is irreleveant.

5. $\gamma=0$, $\alpha = 0$ results in the Chalmers-Siegel action functional:
\begin{equation} \label{C-S-action}
    S_{sdYM}[A,G^+] = \int \tr ( G^{+} \wedge (dA + \half[A,A]))\,,
\end{equation}

The analysis of the BV spectrum of these theories with $\gamma \neq 0$ is pretty much the same as in \cite{Dneprov:2022jyn} whilst in the case $\gamma = 0$ the $G^-_{ab}$ component (together with its antifields) lies in the kernel of the presymplectic structure on the super jet-bundle, resulting in the correct BV field-antifield spectrum. 

Note that in the case $\gamma = 0$ the kernel distribution of $\omega$ includes a regular subdistribution generated by $\ddl{}{G^-_{ab}}$ so that $G^-_{ab}$ can be factored out immediately on the initial bundle $E$. This results  in another weak gPDE where the coordinate $G^-_{ab}$ is absent. Finally, the the BV action is given by 
\begin{multline}
    S_{BV,sdYM} = S_{sdYM} + \\
    \int d^4 x \tr(A^*_d(\partial^d C+ [C,A^d]) + \frac{1}{2} C^*  [C,C] + (G^+)^*_{ab}[(G^+)^{ab},C])\,,
\end{multline}
where ghosts and antifields arise as the appropriate components of a supersection.

\subsection{Holst gravity}

The next example is the family of gravity theories, known as Holst gravity~\cite{Holst:1995pc}, which contains the usual Cartan-Weyl Lagrangian as well as its chiral version~\cite{Samuel:1987td,Jacobson:1987yw}.

We immediately start with the associated weak presymplectic gPDE which we take to be a trivial bundle $E\rightarrow T[1]X$, with the fiber being a linear manifold with the following coordinates:
\begin{equation*}
    \begin{gathered}
        \xi^a, \rho_{ab}, \qquad \quad \gh{\xi^a}=\gh{\rho_{ab}}=1\,.      
    \end{gathered}
\end{equation*}
The base $T[1]X$ is coordinatized, as usual, by $x^\mu$, $\gh{x^\mu}=0$ and 
$\theta^\mu$, $\gh{\theta^\mu}=1$. The action of $Q$ is determined by:
\begin{equation}
\begin{gathered}
    Q \xi^a = -\rho^a{ }_{k}\xi^k ,  \\
    Q \rho_{ab} = -\rho_{ak}\rho^k{ }_{b} - \frac{\Lambda}{2}\xi_a \xi_b  , \\
    Q x^{\mu} = \theta^{\mu} ,
\end{gathered}
\end{equation}
where $\Lambda\in \fR$ is a multiple of the cosmological constant.
As the presymplectic structure we take:
\begin{equation}
\label{holst-psymp}
    \omega = \half \epsilon_{abcd}d(\rho^{ab})d(\xi^c \xi^d) + \alpha d(\rho_{ab})d(\xi^a \xi^b),
\end{equation}
where $\alpha \in \fR$ is a parameter. Note that for $\alpha=\pm 1$ the distribution generated by $\dl{\rho^{\mp}_{ab}}$ belongs to the kernel of $\omega$. For $\alpha$ generic the kernel distribution does not have nowhere vanishing vector fields. 

It is easy to check that all the conditions are satisfied. In particular, 
\begin{equation}
    i_Q \omega =  -\half \epsilon^{abcd}d(\xi_a \xi_b \rho_c{ }^k\rho_{kd} + \frac{\Lambda}{4} \xi_a \xi_b \xi_c \xi_d) - \alpha d(\xi^c \xi^d \rho_c{ }^k\rho_{kd})\,.
\end{equation}
If we  parametrize the space of sections by 
\begin{equation}
\begin{gathered}
    \sigma^*(\xi^a) = e^a{ }_{\mu}\theta^\mu\,, \\
    \sigma^*(\rho_{ab}) = \rho_{ab,\mu}\theta^\mu\,, \\      
\end{gathered}
\end{equation}
the associated action takes the following form:
\begin{equation}
    S[e^a, \rho_{ab}] = \int e_a e_b T^{abcd}(d\rho_{cd} + \rho_{c}{ }^k \rho_{kd} + \frac{\Lambda}{4}e_c e_d)\,,
\end{equation}
where $T^{abcd} = \half \epsilon^{abcd} + \frac{\alpha}{2}(\eta^{ac}\eta^{bd}-\eta^{ad}\eta^{bc})$. This is the Holst generalization of the usual Cartan-Weyl action which is reproduced at $\alpha=0$. For generic $\alpha$ the action is still equivalent to that with $\alpha=0$ and hence also describes the standard Einstein gravity. The construction of the BV formulation on the symplectic quotient of the super-jet bundle is a direct generalization of that from~\cite{Grigoriev:2020xec}.

We have seen that the case where $\alpha=\pm 1$ is a special one because the kernel distribution of 
\eqref{holst-psymp} involves (anti)-selfdual components $\rho^{\mp}_{ab}$ of $\rho_{ab}$. In this case (for definiteness we stick to $\alpha=1$) the action takes the form \cite{Samuel:1987td,Jacobson:1987yw,Holst:1995pc}:
\begin{equation}
\label{cPCW}
    S_{cCW}[e^a, \rho_{ab}] = \int e^a e^b R^+_{ab}(\rho)
\end{equation}
where $R^+$ denotes a selfdual part of the curvature. In particular, 
$S_{cCW}$ does not depend on the anti-selfdual component $\omega^-$ of the spin-connection $\omega$. Moreover, the respective BV action does not involve ghosts and antifields associated to $\rho^{-}$ because they belong to the kernel of the presymplectic structure and hence do not enter the induced BV formulation. Nevertheless, the action is still equivalent to the full Einstein gravity. Varying w.r.t. $\omega^+$ gives that $\omega_{ab}^+ = \omega_{ab}^+(e)$ is the selfdual part of the Levi-Chevita connection associated to $e^a$.  Varying w.r.t. $e^a$ tells that the selfdual part of the Riemann tensor associated to $e^a$ is Einstein. This in turn implies that the full Riemann tensor is Einstein.

The formulation in the case of $\alpha=\pm1$ can be achieved by taking a quotient of the initial $E$ by the subdistribution of the kernel distribution, generated by $\dl{\rho^\mp_{ab}}$. In fact this is a maximal regular subbundle of $TE$ that belongs to the kernel distribution. It is easy to check that such a quotient is again a weak presymplectic gPDE and the gauge theory it encodes is precisely~\eqref{cPCW}. 

\subsection{Plebansky gravity}

Consider a trivial bundle $E \rightarrow T[1]X$ with the following coordinates:
\begin{equation}
\begin{gathered}
    B^i \quad gh = 2 \, ,\\
    C^i, \theta^a \quad gh =1 \, , \\
    x^a \quad gh = 0 \, .
\end{gathered}
\end{equation}
Here the index $i$ is the $\mathfrak{su}(2)$ algebra index.
The coordinates $B^i$ are not totally independent, they are subject to the following relation: 
\begin{equation} \label{metricity_cond}
    B^i B^j = \frac{1}{3}\delta^{ij} B^k B^k \, .
\end{equation}
This means that the actual fiber of our system is the nonregular "surface" \eqref{metricity_cond} in the space with coordinates $B^i,C^i$. However, we are going to see that the prolongation of this surface is regular provided one restricts to configurations where the 2-form field associated to $B^i_{ab}$ is in a certain sense nondegenerate.

The action of $Q$ is determined by:
\begin{equation} \label{Pleb-complex}
\begin{gathered}
    Q C^i = -\frac{1}{2} \epsilon^{ijk} C^j C^k + \Lambda B^i \, , \\
    Q B^i = -\epsilon^{ijk}C^j B^k \,,
\end{gathered}
\end{equation}
where $\Lambda \in \fR$ is a numerical parameter. It is easy to see that $Q$ is compatible with the constraints \eqref{metricity_cond}, i.e. $Q$ preserves the ideal of functions on $E$ generated by \eqref{metricity_cond}.

The presymplectic structure is taken to be:
\begin{equation}
    \omega = dC^i dB^i\,, \qquad \gh{\omega}=3\,.
\end{equation}
This structure is nondegenerate and would just define a topological BF theory if we forget about the constraints \eqref{metricity_cond}. However, it has zero vectors on the surface, e.g. vector fields
\begin{equation}
    X^{jk} = (\frac{2}{3}B^i \delta^{jk} - B^j \delta^{ik} - B^k \delta^{ij}) \ddl{}{C^i}
\end{equation}
preserve the surface and satisfy
\begin{equation}
    i_{X^{jk}} \omega = \frac{2}{3} \delta^{jk} B^i dB^i - B^j dB^k - B^k dB^j = d(\frac{1}{3} \delta^{jk} B^i B^i - B^j B^k ) = 0
\end{equation}
modulo terms proportional to~\eqref{metricity_cond}. The presymplectic structure is $Q$ invariant:
\begin{equation}
    i_Q \omega = d(-\half \epsilon^{ijk}C^i C^j B^k + \frac{\Lambda}{2}B^i B^i)\,.
\end{equation}

Let us now turn to the gauge theory our system defines. To this end we parametrize the space of sections as:
\begin{equation}
\begin{gathered}
    \sigma^*(C^i) = \rho^i_{a}\theta^a \,, \\
    \sigma^*(B^i) = \frac{1}{2} B^i_{ab}\theta^a \theta^b \,,
\end{gathered}
\end{equation}
where $ B^i_{ab}(x)$ is subject to the prolongation of the equation~~\eqref{metricity_cond}. The action reads as:
\begin{equation}
    S[\rho, B] = \int B^i(d\rho^i +  \half \epsilon^{ijk}\rho^j \rho^k) - \frac{\Lambda}{2}B^i B^i
\end{equation}
which together with the constraint~\eqref{metricity_cond} gives the Plebansky formulation of gravity~\cite{Plebanski:1977zz,Capovilla:1991qb}. More precisely, the conformal Urbantke
metric can be defined in terms of $B^i_{ab}$ as (see e.g.~\cite{Capovilla:1991qb,Freidel:2012np} for more details)
\begin{equation} \label{Urbantke}
    g_{ab} \sqrt{|g|} = \epsilon^{cdnm}\epsilon^{ijk}B^i_{ac}B^j_{bd}B^k_{nm}\,,
\end{equation}
where it is assumed that $B^i_{ab}$ is subject to the following nondegeneracy conditions: $K^{ij} = B^i_{ab} B^j_{cd} \epsilon^{abcd}$ is invertible. This, in turn, ensures that $g_{ab}$ is invertible. \footnote{If one opts to single out a definite signature for this metric some more conditions are to be imposed on $B^i_{ab}$.}

Let us finally discuss the BV formulation encoded in the above weak presymplectic gPDE with constraints~\eqref{metricity_cond}. To this end consider the super-jet bundle $SJ^\infty(E)$ where we need to take into account the prolongation of the constraints~\eqref{metricity_cond}. More precisely, we again employ the following special coordinates for the $\theta$ jets (see Appendix~\bref{app:jets} for further details) 
\begin{equation}
\begin{gathered}
    (\pi^\infty_E)^*(C^i) = C^i(\theta) = \st{0}{\rho}{ }^i + \st{}{\rho}{ }^i_{a}\theta^a + \frac{1}{2} \st{2}{\rho}{ }^i_{ab}\theta^a \theta^b +  \frac{1}{6}\st{3}{\rho}{ }^i_{abc}\theta^a \theta^b \theta^c + ...\\ 
    (\pi^\infty_E)^*(B^i) = B^i(\theta) = \st{0}{B}{ }^i + \st{1}{B}{ }^i_{a}\theta^a + \frac{1}{2} \st{}{B}{ }^i_{ab}\theta^a \theta^b + \frac{1}{6}\st{3}{B}{ }^i_{abc}\theta^a \theta^b \theta^c + ...
\end{gathered}
\end{equation}
In these coordinates the prolongation of~\eqref{metricity_cond}
are simply the homogeneous in $\theta$ components of
\begin{equation} \label{theta_exp_metr}
    B^i(\theta)B^j(\theta) - \frac{\delta^{ij}}{3}B^k(\theta) B^k(\theta) = 0 \,.
\end{equation}

To see that the local BV formulation encoded in the above system is proper, i.e. takes into account all the gauge symmetries, it is enough to show this for a theory linearized around a vacuum solution $B^i(\theta) = \Sigma^i_{ab}\theta^a \theta^b$, see e.g.~\cite{Grigoriev:2020xec} for a similar analysis.  The linearization of~\eqref{metricity_cond}
reads as 
\begin{equation} \label{exp_metr_cond}
    \begin{gathered}
        \st{0}{b}{ }^i \Sigma^j_{[ab]} + \Sigma^i_{[ab]} \st{0}{b}{ }^j - \frac{2\delta^{ij}}{3} \st{0}{b}{ }^k \Sigma^k_{[ab]} = 0 \,, \\
        \st{1}{b}{ }^i_{[a} \Sigma^j_{bc]} + \Sigma^i_{[ab} \st{1}{b}{ }^j_{c]} - \frac{2\delta^{ij}}{3} \st{1}{b}{ }^l_{[a} \Sigma^l_{bc]} = 0\,, \\
    \end{gathered}
\end{equation}
where $b^i_{...}$ denote the perturbation of the respective $B^i_{...}$ and we only listed explicitly the equations involving $\st{0}{b}{ }^i$ and 
$\st{1}{b}{ }^i$. 

The first equation implies $\st{0}{b}{ }^i = 0$ so there are no ghost $2$ variables.  To analyze the second equation let us take $\Sigma^i_{ab}$ describing a constant frame-field. The simplest choice is to take constant $\Sigma^i_{ab}$ such that $\Sigma^i_{ab}\theta^a \theta^b$ are basis elements in the space of constant selfdual 2-forms on $X$. By employing the  language of two-component spinors one then finds that only 4 independent components survive in $b^i_a$. These can be parameterized in terms of independent $\tilde b_a$  as $\st{1}{b}{ }^i_a =  \epsilon^{abcd} \Tilde{b}{ }_b \Sigma^i_{cd}$. At the same time $\theta$-jets of $C^i$ remain independent. 

All in all we conclude that the BV system associated to this weak presymplectic gPDE model contains all the right fields: 4 diffeomorphism ghosts parameterized by 4 independent components of $\st{1}{b}{ }^i_a$, the selfdual Lorentz ghost $\rho^i$, the frame field inside $B^i_{ab}$ and the Lorentz connection $\rho^i_{a}$. Of course the respective BV antifileds must also be present because the hidden BV system has a nondegenerate symplectic structure.

\subsection{Conformal gravity}

In this section we present a formulation for conformal gravity using the weak presymplectic formalism. This is the refined version of the formulation presented in \cite{Dneprov:2022jyn}. However, the advantage of  the weak presymplectic gauge PDE formalism is that one only works with finite-dimensional objects.

Let $E \rightarrow T[1]X$ be a trivial bundle with the following coordinates:
\begin{equation}
    \begin{gathered}
        \xi^a ,\rho_{ab}, \kappa_a, \lambda , \quad gh = 1 \\
        W^a{ }_{bcd}, C_{abc}, \quad gh = 0
    \end{gathered}
\end{equation}
where $W^a{ }_{bcd}$ has the Weyl ("traceless window") tensor index symmetry and $C_{abc}$ has the Cotton ("traceless hook") tensor index symmetry and $x^\mu,\theta^\mu$ are coordinates on the base $T[1]X$. The action of $Q$ is given by: 
\begin{equation} 
\begin{gathered}
    Q\xi^a = \rho^{a}{ }_c\xi^c + \xi^a \lambda\,, \\
    Q \rho^{a}{ }_b = \rho^{a}{ }_c \rho^{c}{ }_b + (\xi^a\kappa_b - \xi_b\kappa^a) + \frac{1}{2}\xi^c \xi^d W^a{ }_{bcd}\,, \\
    Q \kappa_b = \kappa_c\rho^{c}{ }_b + \lambda \kappa_b + \frac{1}{2}\xi^c \xi^d C_{bcd}\,, \\
    Q\lambda = \kappa_c\xi^c \,,
\end{gathered}
\end{equation}
and
\begin{equation}
\begin{aligned}
    Q W^a{ }_{bcd} =&~P_W(\xi^a C_{bcd}) - \rho_k{ }^a W^k{ }_{bcd}+ \\&~\rho_b{ }^k W^a{ }_{kcd} + \rho_c{ }^k W^a{ }_{bkd} + \rho_d{ }^k W^a{ }_{bck} + 2\lambda W^a{ }_{bcd}\,,\\
    Q C_{abc} =&~ \rho_a{  }^k C_{kbc}+ 
    \\&~\rho_b{  }^k C_{akc} + \rho_c{  }^k C_{abk} + 3\lambda C_{abk}  + \kappa_k W^k{ }_{abc} \,.
\end{aligned}
\end{equation}
Here $P_W$ denotes a projector on the space of rank-4 tensors that singles out a Lorentz-irreducible component of tensor structure of a Weyl tensor.
%Explicitly:
%\begin{equation}    P_W(\eta_{ka}C_{bcd}) = \frac{1}{4} (\eta_{k[a}C_{b]cd} - \half(\eta_{ac}C_{bkd} - \eta_{ad}C_{bkc} - \eta_{bc}C_{akd} + \eta_{bd}C_{akc} ) ) + (ab \leftrightarrow cd)\end{equation}

The presymplectic structure is:
\begin{equation}
\begin{gathered}
    \omega = \omega_W-2\omega_C\,, \\ \omega_W=d(\rho_{ab})d(W^{abnm}\epsilon_{nmpk}\xi^p\xi^k)\,,\qquad \omega_C=d(\xi_a)d(C^a{ }_{bc}\epsilon^{bcpk}\xi_p \xi_k)\,. 
\end{gathered}
\end{equation}
Let us check that the axioms are satisfied. We have:
\begin{multline} \label{iqomw}
    i_{\tgamma}\omega_W = d(\rho_{al}\rho^l{ }_b W^{abnm}\epsilon_{nmpk}\xi^p\xi^k) + (\xi_a\kappa_b - \xi_b\kappa_a)d(W^{abnm}\epsilon_{nmpk}\xi^p\xi^k) + \\ + \frac{1}{2}W_{abij}\xi^i\xi^j d(W^{abnm}\epsilon_{nmpk}\xi^p\xi^k) + d(\rho_{ab})\xi^j P_W(\delta^a_jC^{bnm})\epsilon_{nmpk}\xi^p\xi^k\,,
\end{multline}
and
\begin{equation} \label{iqomc}
\begin{gathered}
    i_{\tgamma} \omega_C = \rho_{an}\xi^n d(C^a{ }_{bc}\epsilon^{bcpk}\xi_p \xi_k) +  d(\xi^a)\rho_a{  }^n C_{nbc} \epsilon^{bcpk}\xi_p \xi_k + \\ + \xi_a \lambda d(C^a{ }_{bc}\epsilon^{bcpk}\xi_p \xi_k) + d(\xi_a) \lambda C^a{ }_{bc}\epsilon^{bcpk}\xi_p \xi_k + \\ + d(\xi^a) \kappa_k W^k{ }_{abc} \epsilon^{bcpk}\xi_p \xi_k\,.
\end{gathered}
\end{equation}
Together they satisfy  $i_Q\omega +d\cL=0$ with:
\begin{multline}
\cL= -\rho_{al}\rho^l{ }_b W^{abnm}\epsilon_{nmpk}\xi^p\xi^k- 2\xi_a\kappa_b W^{abnm}\epsilon_{nmpk}\xi^p\xi^k +2\rho_{an}\xi^n C^a{ }_{bc}\epsilon^{bcpk}\xi_p \xi_k +\\+2\xi_a \lambda C^a{ }_{bc}\epsilon^{bcpk}\xi_p \xi_k - \frac{1}{4}W_{abij}\xi^i\xi^j W^{abnm}\epsilon_{nmpk}\xi^p\xi^k\,. 
\end{multline}
Finally, it is straightforward to see that that $Q\cL = 0$ as well as $i_Qi_Q \omega=0$.

The action functional determined by the above data coincides with the one from \cite{Dneprov:2022jyn}, where it was shown to determine an equivalent formulation of conformal gravity.

\section*{Acknowledgments}
\label{sec:Aknowledgements}
%%%%%%%%%%%%%%%%%%%%%%%%%%%%%%%%%%%%%%%%%%%%%%%%%%%%%%%%%%%%%
%This work is dedicated to the memory of Igor Anatolievich Batalin.
%\\

\noindent
We appreciate discussions with J.~Frias, I.~Krasil'shchik, D.~Rudinsky, K.~Druzhkov, A.~Verbovetsky. MG also wishes to thank A.~Chatzistavrakidis, C.~Hull,  K.~Krasnov, A.~Kotov, A.~Sharapov for useful exchanges.
Part of this work was done when ID and MG participated in the thematic program "Emergent Geometries from Strings and Quantum Fields" at the Galileo Galilei Institute for Theoretical Physics, Florence, Italy. The work of VG was supported by Theoretical Physics and Mathematics Advancement Foundation BASIS.

\appendix

\section{Coordinates on superjet space}  \label{app:jets}

Given an adopted coordinate system $\{x^a, \theta^a, u^A\}$ on $E\rightarrow T[1]X$, there exists the associated coordinate system $\{x^a, \theta^a, u^A_{...|...} \}$  on $SJ^\infty(E)$ defined as
\begin{equation}
u^A_{a_1...|b_1...}=D_{a_1}...D^\theta_{b_1}...u^A\,,
\end{equation}
where $D_a = \textrm{tot}(\ddl{}{x^a}), \quad D^\theta_b = \textrm{tot}(\ddl{}{\theta^b})$ are the total derivatives.

However, it appears very useful to employ another natural coordinate system on $SJ^\infty(E)$. The new coordinates are the same $x^a, \theta^a$ and $\psi^A{ }_{a_1...|b_1...}$  defined as: 
\begin{equation} \label{SJ-coord-syst}
\begin{gathered}
  (\psi^A{ }_{a_1...|b_1...}-u^A_{a_1...|b_1...})|_{\theta=0}=0\,, \\
  D^\theta_a \psi^A{ }_{a_1...|b_1...} = 0\,.
\end{gathered}
\end{equation}
Note that in this coordinate system $D^\theta_a = \ddl{}{\theta^a}$.  In particular, coefficients of any evolutionary vector field $s$ 
do not depend on $\theta$ thanks to $\commut{D^\theta_a}{s}=0$.

It is easy to write down the expression of the old coordinates in terms of the new ones:
\begin{equation} \label{SJ-coord-explicit}
\begin{gathered}
    u^A = \psi^A + \theta^a \psi^A_{|a} + \frac{1}{2}\theta^a \theta^b \psi^A_{|ab} + \ldots \\
    u^A_{|a} = \psi^A_{|a} + \theta^b \psi^A_{|ab} + \frac{1}{2}\theta^b \theta^c \psi^A_{|abc}\ldots \\
    \ldots\,.
\end{gathered}
\end{equation}
It is clear that coordinates $\psi^A_{\ldots|\ldots}$ are the ones which are constant along the $\theta$-part of the Cartan distribution. Note that this is a special feature of Grassmann odd base-space coordinates because the even analog, i.e. coordinates $\phi^A_{\ldots|\ldots}$ such that $D_a \phi^A_{\ldots|\ldots}=0$
simply do not exist. Indeed, it is easy to see that $D_a \phi^A_{\ldots|\ldots}=0$
do not have solutions in the space of local functions.

%\newpage
%\bibliographystyle{utphys}
\setlength{\itemsep}{0em}
\small
%\bibliography{HSmaster}

\begin{thebibliography}{10}

\bibitem{Batalin:1981jr}
I.~Batalin and G.~Vilkovisky, ``{G}auge {A}lgebra and {Q}uantization,''
\href{http://dx.doi.org/10.1016/0370-2693(81)90205-7}{{\em Phys.Lett.} {\bfseries B102} (1981) 27--31}.
%%CITATION = PHLTA,B102,27;%%.

\bibitem{Batalin:1983wj}
I.~Batalin and G.~Vilkovisky, ``{F}eynman {R}ules {F}or {R}educible {G}auge {T}heories,''
\href{http://dx.doi.org/10.1016/0370-2693(83)90645-7}{{\em Phys.Lett.} {\bfseries B120} (1983) 166--170}.
%%CITATION = PHLTA,B120,166;%%.

\bibitem{Voronov:1982ur}
B.~L. Voronov and I.~V. Tyutin, ``{F}ormulation of gauge theories of general form. {II}. {G}auge invariant renormalizability and renormalization structure,'' \href{http://dx.doi.org/10.1007/BF01027781}{{\em Theor. Math. Phys.} {\bfseries 52} (1982) 628--637}.

\bibitem{Barnich:1993vg}
G.~Barnich and M.~Henneaux, ``{C}onsistent couplings between fields with a gauge freedom and deformations of the master equation,'' \href{http://dx.doi.org/10.1016/0370-2693(93)90544-R}{{\em Phys. Lett.} {\bfseries B311} (1993) 123--129},
\href{http://arxiv.org/abs/hep-th/9304057}{{\ttfamily arXiv:hep-th/9304057}}.
%%CITATION = HEP-TH/9304057;%%.

\bibitem{Barnich:1994ve}
G.~Barnich and M.~Henneaux, ``{Renormalization of gauge invariant operators and anomalies in Yang-Mills theory},'' \href{http://dx.doi.org/10.1103/PhysRevLett.72.1588}{{\em Phys. Rev. Lett.} {\bfseries 72} (1994) 1588--1591}, \href{http://arxiv.org/abs/hep-th/9312206}{{\ttfamily arXiv:hep-th/9312206}}.

\bibitem{Piguet:1995er}
O.~Piguet and S.~P. Sorella, ``{A}lgebraic renormalization: {P}erturbative renormalization, symmetries and anomalies,''
{\em Lect. Notes Phys.} {\bfseries M28} (1995) 1--134.
%%CITATION = LNPHA,M28,1;%%.

\bibitem{Barnich:2000zw}
G.~Barnich, F.~Brandt, and M.~Henneaux, ``{L}ocal {BRST} cohomology in gauge theories,'' \href{http://dx.doi.org/10.1016/S0370-1573(00)00049-1}{{\em Phys.Rept.} {\bfseries 338} (2000) 439--569},
\href{http://arxiv.org/abs/hep-th/0002245}{{\ttfamily hep-th/0002245}}.
%%CITATION = HEP-TH/0002245;%%.

\bibitem{costello2011renormalization}
K.~Costello, {\em Renormalization and Effective Field Theory}.
\newblock Mathematical surveys and monographs. American Mathematical Society, 2011.
\newblock \url{https://books.google.be/books?id=_h28nQAACAAJ}.

\bibitem{Thorn:1986qj}
C.~B. Thorn, ``{Perturbation Theory for Quantized String Fields},''
\href{http://dx.doi.org/10.1016/0550-3213(87)90096-4}{{\em Nucl. Phys.} {\bfseries B287} (1987) 61}.
%%CITATION = NUPHA,B287,61;%%.

\bibitem{Bochicchio:1986zj}
M.~Bochicchio, ``{Gauge Fixing for the Field Theory of the Bosonic String},''
\href{http://dx.doi.org/10.1016/0370-2693(87)90451-5}{{\em Phys. Lett.} {\bfseries B193} (1987) 31}.
%%CITATION = PHLTA,B193,31;%%.

\bibitem{Zwiebach:1993ie}
B.~Zwiebach, ``{C}losed string field theory: {Q}uantum action and the {B}-{V} master equation,'' {\em Nucl. Phys.} {\bfseries B390} (1993) 33--152,
\href{http://arxiv.org/abs/hep-th/9206084}{{\ttfamily hep-th/9206084}}.
%%CITATION = HEP-TH 9206084;%%.

\bibitem{Alexandrov:1995kv}
M.~Alexandrov, M.~Kontsevich, A.~Schwartz, and O.~Zaboronsky, ``{T}he {G}eometry of the master equation and topological quantum field theory,'' \href{http://dx.doi.org/10.1142/S0217751X97001031}{{\em Int.J.Mod.Phys.} {\bfseries A12} (1997) 1405--1430},
\href{http://arxiv.org/abs/hep-th/9502010}{{\ttfamily hep-th/9502010}}.
%%CITATION = HEP-TH/9502010;%%.

\bibitem{Cattaneo:1999fm}
A.~S. Cattaneo and G.~Felder, ``{A} path integral approach to the {K}ontsevich quantization formula,'' {\em Commun. Math. Phys.} {\bfseries 212} (2000) 591--611,
\href{http://arxiv.org/abs/math.qa/9902090}{{\ttfamily math.qa/9902090}}.
%%CITATION = MATH.QA 9902090;%%.

\bibitem{Batalin:2001fc}
I.~Batalin and R.~Marnelius, ``{S}uperfield algorithms for topological field theories,'' in {\em ``Multiple facets of quantization and supersymmetry''}, M.~Olshanetsky and A.~Vainshtein, eds., pp.~233--251.
\newblock World Scientific, 2002.
\newblock \href{http://arxiv.org/abs/hep-th/0110140}{{\ttfamily hep-th/0110140}}.

\bibitem{Cattaneo:2001ys}
A.~S. Cattaneo and G.~Felder, ``{O}n the {AKSZ} formulation of the {P}oisson sigma model,'' {\em Lett. Math. Phys.} {\bfseries 56} (2001) 163--179,
\href{http://arxiv.org/abs/math.qa/0102108}{{\ttfamily math.qa/0102108}}.
%%CITATION = MATH.QA 0102108;%%.

\bibitem{Roytenberg:2002nu}
D.~Roytenberg, ``{On the structure of graded symplectic supermanifolds and Courant algebroids},'' in {\em {Workshop on Quantization, Deformations, and New Homological and Categorical Methods in Mathematical Physics}}.
\newblock 3, 2002.
\newblock \href{http://arxiv.org/abs/math/0203110}{{\ttfamily arXiv:math/0203110}}.

\bibitem{Bonechi:2009kx}
F.~Bonechi, P.~Mnev, and M.~Zabzine, ``{F}inite dimensional {AKSZ}-{BV} theories,'' \href{http://dx.doi.org/10.1007/s11005-010-0423-3}{{\em Lett. Math. Phys.} {\bfseries 94} (2010) 197--228},
\href{http://arxiv.org/abs/0903.0995}{{\ttfamily arXiv:0903.0995 [hep-th]}}.
%%CITATION = 0903.0995;%%.

\bibitem{Barnich:2009jy}
G.~Barnich and M.~Grigoriev, ``{A} {P}oincare lemma for sigma models of {AKSZ} type,'' \href{http://dx.doi.org/10.1016/j.geomphys.2010.11.014}{{\em J.Geom.Phys.} {\bfseries 61} (2011) 663--674},
\href{http://arxiv.org/abs/0905.0547}{{\ttfamily arXiv:0905.0547 [math-ph]}}.
%%CITATION = ARXIV:0905.0547;%%.

\bibitem{Bonavolonta:2013mza}
G.~Bonavolonta and A.~Kotov, \href{http://dx.doi.org/10.1007/978-3-319-09949-1_10}{``{Local BRST Cohomology for AKSZ Field Theories: A Global Approach},''} in {\em {Proceedings, Winter School in Mathematical Physics: Mathematical Aspects of Quantum Field Theory: Les Houches, France, January 29-February 3, 2012}}, pp.~325--341, Springer.
\newblock Springer, 2015.
\newblock
\href{http://arxiv.org/abs/1310.0245}{{\ttfamily arXiv:1310.0245 [math-ph]}}.
\newblock
%%CITATION = ARXIV:1310.0245;%%.

\bibitem{Ikeda:2012pv}
N.~Ikeda, \href{http://dx.doi.org/10.1142/9789813144613_0003}{``{Lectures on AKSZ Sigma Models for Physicists},''} in {\em {Workshop on Strings, Membranes and Topological Field Theory}}, pp.~79--169.
\newblock WSPC, 2017.
\newblock \href{http://arxiv.org/abs/1204.3714}{{\ttfamily arXiv:1204.3714 [hep-th]}}.

\bibitem{Grigoriev:1999qz}
M.~A. Grigoriev and P.~H. Damgaard, ``{S}uperfield {BRST} charge and the master action,'' \href{http://dx.doi.org/10.1016/S0370-2693(00)00050-2}{{\em Phys. Lett.} {\bfseries B474} (2000) 323--330},
\href{http://arxiv.org/abs/hep-th/9911092}{{\ttfamily arXiv:hep-th/9911092 [hep-th]}}.
%%CITATION = HEP-TH 9911092;%%.

\bibitem{Fradkin:1975cq}
E.~S. Fradkin and G.~A. Vilkovisky, ``{Q}uantization of relativistic systems with constraints,'' \href{http://dx.doi.org/10.1016/0370-2693(75)90448-7}{{\em Phys. Lett. B} {\bfseries 55} (1975) 224--226}.

\bibitem{Batalin:1977pb}
I.~Batalin and G.~Vilkovisky, ``{R}elativistic {S} {M}atrix of {D}ynamical {S}ystems with {B}oson and {F}ermion {C}onstraints,''
\href{http://dx.doi.org/10.1016/0370-2693(77)90553-6}{{\em Phys.Lett.} {\bfseries B69} (1977) 309--312}.
%%CITATION = PHLTA,B69,309;%%.

\bibitem{Fradkin:1977xi}
E.~S. Fradkin and T.~E. Fradkina, ``{Quantization of Relativistic Systems with Boson and Fermion First and Second Class Constraints},'' \href{http://dx.doi.org/10.1016/0370-2693(78)90135-1}{{\em Phys. Lett. B} {\bfseries 72} (1978) 343--348}.

\bibitem{Barnich:2003wj}
G.~Barnich and M.~Grigoriev, ``{H}amiltonian {BRST} and {B}atalin-{V}ilkovisky formalisms for second quantization of gauge theories,'' \href{http://dx.doi.org/10.1007/s00220-004-1275-4}{{\em Commun.Math.Phys.} {\bfseries 254} (2005) 581--601},
\href{http://arxiv.org/abs/hep-th/0310083}{{\ttfamily arXiv:hep-th/0310083 [hep-th]}}.
%%CITATION = HEP-TH/0310083;%%.

\bibitem{Grigoriev:2010ic}
M.~Grigoriev, ``{P}arent formulation at the {L}agrangian level,'' \href{http://dx.doi.org/10.1007/JHEP07(2011)061}{{\em JHEP} {\bfseries 07} (2011) 061},
\href{http://arxiv.org/abs/1012.1903}{{\ttfamily arXiv:1012.1903 [hep-th]}}.
%%CITATION = 1012.1903;%%.

\bibitem{Grigoriev:2012xg}
M.~Grigoriev, ``{P}arent formulations, frame-like {L}agrangians, and generalized auxiliary fields,'' \href{http://dx.doi.org/10.1007/JHEP12(2012)048}{{\em JHEP} {\bfseries 1212} (2012) 048},
\href{http://arxiv.org/abs/1204.1793}{{\ttfamily arXiv:1204.1793 [hep-th]}}.
%%CITATION = ARXIV:1204.1793;%%.

\bibitem{Cattaneo:2012qu}
A.~Cattaneo, P.~Mnev, and N.~Reshetikhin, ``{Classical BV theories on manifolds with boundary},'' \href{http://dx.doi.org/10.1007/s00220-014-2145-3}{{\em Commun. Math. Phys.} {\bfseries 332} (2014) 535--603},
\href{http://arxiv.org/abs/1201.0290}{{\ttfamily arXiv:1201.0290 [math-ph]}}.
%%CITATION = ARXIV:1201.0290;%%.

\bibitem{Cattaneo:2015vsa}
A.~S. Cattaneo, P.~Mnev, and N.~Reshetikhin, ``{Perturbative quantum gauge theories on manifolds with boundary},'' \href{http://dx.doi.org/10.1007/s00220-017-3031-6}{{\em Commun. Math. Phys.} {\bfseries 357} no.~2, (2018) 631--730},
\href{http://arxiv.org/abs/1507.01221}{{\ttfamily arXiv:1507.01221 [math-ph]}}.
%%CITATION = ARXIV:1507.01221;%%.

\bibitem{Barnich:2006hbb}
G.~Barnich and M.~Grigoriev, ``{BRST Extension of the Non-Linear Unfolded Formalism},'' {\em Bulg. J. Phys.} {\bfseries 33} no.~s1, (2006) 547--556, \href{http://arxiv.org/abs/hep-th/0504119}{{\ttfamily hep-th/0504119}}.

\bibitem{Barnich:2010sw}
G.~Barnich and M.~Grigoriev, ``{F}irst order parent formulation for generic gauge field theories,'' \href{http://dx.doi.org/10.1007/JHEP01(2011)122}{{\em JHEP} {\bfseries 01} (2011) 122},
\href{http://arxiv.org/abs/1009.0190}{{\ttfamily arXiv:1009.0190 [hep-th]}}.
%%CITATION = 1009.0190;%%.

\bibitem{Barnich:2004cr}
G.~Barnich, M.~Grigoriev, A.~Semikhatov, and I.~Tipunin, ``{P}arent field theory and unfolding in {BRST} first-quantized terms,'' \href{http://dx.doi.org/10.1007/s00220-005-1408-4}{{\em Commun.Math.Phys.} {\bfseries 260} (2005) 147--181},
\href{http://arxiv.org/abs/hep-th/0406192}{{\ttfamily arXiv:hep-th/0406192 [hep-th]}}.
%%CITATION = HEP-TH/0406192;%%.

\bibitem{Grigoriev:2006tt}
M.~Grigoriev, ``{O}ff-shell gauge fields from {BRST} quantization,''
\href{http://arxiv.org/abs/hep-th/0605089}{{\ttfamily arXiv:hep-th/0605089 [hep-th]}}.
%%CITATION = HEP-TH 0605089;%%.

\bibitem{Grigoriev:2019ojp}
M.~Grigoriev and A.~Kotov, ``{Gauge PDE and AKSZ-type Sigma Models},'' \href{http://dx.doi.org/10.1002/1521-3978(200209)50:8/9<825::AID-PROP825>3.0.CO;2-V}{{\em Fortsch. Phys.} (2019) }, \href{http://arxiv.org/abs/1903.02820}{{\ttfamily arXiv:1903.02820 [hep-th]}}.

\bibitem{Kotov:2007nr}
A.~Kotov and T.~Strobl, ``{Characteristic classes associated to Q-bundles},'' \href{http://dx.doi.org/10.1142/S0219887815500061}{{\em Int. J. Geom. Meth. Mod. Phys.} {\bfseries 12} no.~01, (2014) 1550006},
\href{http://arxiv.org/abs/0711.4106}{{\ttfamily arXiv:0711.4106 [math.DG]}}.
%%CITATION = ARXIV:0711.4106;%%.

\bibitem{Vasiliev:1988xc}
M.~A. Vasiliev, ``{Equations of Motion of Interacting Massless Fields of All Spins as a Free Differential Algebra},''
\href{http://dx.doi.org/10.1016/0370-2693(88)91179-3}{{\em Phys. Lett.} {\bfseries B209} (1988) 491--497}.
%%CITATION = PHLTA,B209,491;%%.

\bibitem{Vasiliev:2005zu}
M.~A. Vasiliev, ``{A}ctions, charges and off-shell fields in the unfolded dynamics approach,'' {\em Int. J. Geom. Meth. Mod. Phys.} {\bfseries 3} (2006) 37--80,
\href{http://arxiv.org/abs/hep-th/0504090}{{\ttfamily hep-th/0504090}}.
%%CITATION = HEP-TH 0504090;%%.

\bibitem{Alkalaev:2013hta}
K.~B. Alkalaev and M.~Grigoriev, ``{Frame-like Lagrangians and presymplectic AKSZ-type sigma models},'' \href{http://dx.doi.org/10.1142/S0217751X14501036}{{\em Int. J. Mod. Phys.} {\bfseries A29} no.~18, (2014) 1450103},
\href{http://arxiv.org/abs/1312.5296}{{\ttfamily arXiv:1312.5296 [hep-th]}}.
%%CITATION = ARXIV:1312.5296;%%.

\bibitem{Grigoriev:2016wmk}
M.~Grigoriev, ``{Presymplectic structures and intrinsic Lagrangians},''
\href{http://arxiv.org/abs/1606.07532}{{\ttfamily arXiv:1606.07532 [hep-th]}}.
%%CITATION = ARXIV:1606.07532;%%.

\bibitem{Grigoriev:2020xec}
M.~Grigoriev and A.~Kotov, ``{Presymplectic AKSZ formulation of Einstein gravity},'' \href{http://dx.doi.org/10.1007/JHEP09(2021)181}{{\em JHEP} {\bfseries 09} (2021) 181}, \href{http://arxiv.org/abs/2008.11690}{{\ttfamily arXiv:2008.11690 [hep-th]}}.

\bibitem{Dneprov:2022jyn}
I.~Dneprov and M.~Grigoriev, ``{Presymplectic BV-AKSZ formulation of conformal gravity},'' \href{http://dx.doi.org/10.1140/epjc/s10052-022-11082-6}{{\em Eur. Phys. J. C} {\bfseries 83} no.~1, (2023) 6}, \href{http://arxiv.org/abs/2208.02933}{{\ttfamily arXiv:2208.02933 [hep-th]}}.

\bibitem{Grigoriev:2022zlq}
M.~Grigoriev, ``{Presymplectic gauge PDEs and Lagrangian BV formalism beyond jet-bundles},'' \href{http://dx.doi.org/10.1090/conm/788/15822}{{\em Contemp. Math.} {\bfseries 788} (2023) 111--134}, \href{http://arxiv.org/abs/2212.11350}{{\ttfamily arXiv:2212.11350 [math-ph]}}.

\bibitem{Chalmers:1996rq}
G.~Chalmers and W.~Siegel, ``{The Selfdual sector of QCD amplitudes},'' \href{http://dx.doi.org/10.1103/PhysRevD.54.7628}{{\em Phys. Rev. D} {\bfseries 54} (1996) 7628--7633}, \href{http://arxiv.org/abs/hep-th/9606061}{{\ttfamily arXiv:hep-th/9606061}}.

\bibitem{Holst:1995pc}
S.~Holst, ``{Barbero's Hamiltonian derived from a generalized Hilbert-Palatini action},'' \href{http://dx.doi.org/10.1103/PhysRevD.53.5966}{{\em Phys. Rev. D} {\bfseries 53} (1996) 5966--5969}, \href{http://arxiv.org/abs/gr-qc/9511026}{{\ttfamily arXiv:gr-qc/9511026}}.

\bibitem{Plebanski:1977zz}
J.~F. Plebanski, ``{On the separation of Einsteinian substructures},'' \href{http://dx.doi.org/10.1063/1.523215}{{\em J. Math. Phys.} {\bfseries 18} (1977) 2511--2520}.

\bibitem{Dickey:1991xa}
L.~Dickey, ``Soliton equations and hamiltonian systems (second edition),'' 2003.
\newblock \url{https://books.google.de/books?id=A73UCgAAQBAJ}.

\bibitem{Barnich:1997ed}
G.~Barnich, ``{Brackets in the jet bundle approach to field theory},'' \href{http://dx.doi.org/10.1090/conm/219/03064}{{\em Contemp. Math.} {\bfseries 219} (1998) 17--28}, \href{http://arxiv.org/abs/hep-th/9709164}{{\ttfamily arXiv:hep-th/9709164}}.

\bibitem{Sharapov:2016sgx}
A.~A. Sharapov, ``{Variational Tricomplex, Global Symmetries and Conservation Laws of Gauge Systems},'' \href{http://dx.doi.org/10.3842/SIGMA.2016.098}{{\em SIGMA} {\bfseries 12} (2016) 098}, \href{http://arxiv.org/abs/1607.01626}{{\ttfamily arXiv:1607.01626 [math-ph]}}.

\bibitem{Gaset:2022ema}
J.~Gaset, ``{Geometric Gauge Freedom in Multisymplectic Field Theories},'' \href{http://arxiv.org/abs/2209.11212}{{\ttfamily arXiv:2209.11212 [math-ph]}}.

\bibitem{Anderson1991}
I.~Anderson, ``{I}ntroduction to the variational bicomplex,'' in {\em Mathematical Aspects of Classical Field Theory}, M.~Gotay, J.~Marsden, and V.~Moncrief, eds., vol.~132 of {\em Contemporary Mathematics}, pp.~51--73.
\newblock Amer. Math. Soc., 1992.

\bibitem{Andersonbook}
I.~Anderson, ``{T}he variational bicomplex,'' tech. rep., Formal Geometry and Mathematical Physics, Department of Mathematics, Utah State University, 1989.

\bibitem{Khudaverdian:2001qe}
H.~Khudaverdian and T.~Voronov, ``{O}n complexes related with calculus of variations,'' \href{http://dx.doi.org/10.1016/S0393-0440(02)00075-X}{{\em J. Geom. Phys.} {\bfseries 44} (2002) 221--250},
\href{http://arxiv.org/abs/math/0105223}{{\ttfamily arXiv:math/0105223}}.
%%CITATION = MATH/0105223;%%.

\bibitem{Krasil'shchik:2010ij}
J.~Krasil'shchik and A.~Verbovetsky, ``{G}eometry of jet spaces and integrable systems,'' \href{http://dx.doi.org/10.1016/j.geomphys.2010.10.012}{{\em J. Geom. Phys.} {\bfseries 61} (2011) 1633--1674},
\href{http://arxiv.org/abs/1002.0077}{{\ttfamily arXiv:1002.0077 [math.DG]}}.
%%CITATION = 1002.0077;%%.

\bibitem{Kijowski:1973gi}
J.~Kijowski, ``{A finite-dimensional canonical formalism in the classical field theory},'' \href{http://dx.doi.org/10.1007/BF01645975}{{\em Commun. Math. Phys.} {\bfseries 30} (1973) 99--128}.

\bibitem{Kijowski:1979dj}
J.~Kijowski and W.~Tulczyjew, {\em A Symplectic Framework for Field Theories}.
\newblock Lecture Notes in Physics. Springer Berlin Heidelberg, 1979.
\newblock \url{https://books.google.de/books?id=d6\_vAAAAMAAJ}.

\bibitem{Crnkovic:1986ex}
C.~Crnkovic and E.~Witten, ``{C}ovariant {D}escription {O}f {C}anonical {F}ormalism {I}n {G}eometrical {T}heories,''.
in Three hundred years of gravitation, S. W. Hawking and W. Israel, eds., pp. 676-684. Cambridge University Press, Cambridge, 1987.
%%CITATION = PRINT-86-1309 (PRINCETON);%%.

\bibitem{Lee:1990nz}
J.~Lee and R.~M. Wald, ``{Local symmetries and constraints},'' \href{http://dx.doi.org/10.1063/1.528801}{{\em J. Math. Phys.} {\bfseries 31} (1990) 725--743}.

\bibitem{Grigoriev:2021wgw}
M.~Grigoriev and V.~Gritzaenko, ``{Presymplectic structures and intrinsic Lagrangians for massive fields},'' \href{http://dx.doi.org/10.1016/j.nuclphysb.2022.115686}{{\em Nucl. Phys. B} {\bfseries 975} (2022) 115686}, \href{http://arxiv.org/abs/2109.05596}{{\ttfamily arXiv:2109.05596 [hep-th]}}.

\bibitem{Khavkine2012}
I.~Khavkine, ``{P}resymplectic current and the inverse problem of the calculus of variations,'' \href{http://dx.doi.org/10.1063/1.4828666}{{\em J. Math. Phys.} {\bfseries 54,} (Oct., 2012) 111502}, \href{http://arxiv.org/abs/1210.0802}{{\ttfamily 1210.0802}}.

\bibitem{Druzhkov:2021}
K.~P. Druzhkov, ``{Extendable symplectic structures and the inverse problem of the calculus of variations for systems of equations written in generalized Kovalevskaya form},'' \href{http://dx.doi.org/10.1016/j.geomphys.2020.104013}{{\em Journal of Geometry and Physics} {\bfseries 161} (Mar., 2021) }.

\bibitem{Gotay:1997eg}
M.~J. Gotay, J.~Isenberg, J.~E. Marsden, and R.~Montgomery, ``{M}omentum maps and classical relativistic fields. {I}: {C}ovariant field theory,''
\href{http://arxiv.org/abs/physics/9801019}{{\ttfamily arXiv:physics/9801019}}.
%%CITATION = PHYSICS/9801019;%%.

\bibitem{Bridges2009}
T.~J. {Bridges}, P.~E. {Hydon}, and J.~K. {Lawson}, ``{Multisymplectic structures and the variational bicomplex},'' \href{http://dx.doi.org/10.1017/S0305004109990259}{{\em Mathematical Proceedings of the Cambridge Philosophical Society} {\bfseries 148} (Aug., 2009) 159}.

\bibitem{Rom_n_Roy_2009}
N.~Rom\'an-Roy, ``Multisymplectic lagrangian and hamiltonian formalisms of classical field theories,'' \href{http://dx.doi.org/10.3842/sigma.2009.100}{{\em Symmetry, Integrability and Geometry: Methods and Applications} (Nov., 2009) }. \url{http://dx.doi.org/10.3842/SIGMA.2009.100}.

\bibitem{BRIDGES_1997}
T.~J. Bridges, ``Multi-symplectic structures and wave propagation,'' \href{http://dx.doi.org/10.1017/S0305004196001429}{{\em Mathematical Proceedings of the Cambridge Philosophical Society} {\bfseries 121} no.~1, (1997) 147?190}.

\bibitem{Hydon1627}
P.~Hydon, ``{M}ultisymplectic conservation laws for differential and differential-difference equations,'' \href{http://dx.doi.org/10.1098/rspa.2004.1444}{{\em Proceedings of the Royal Society of London A: Mathematical, Physical and Engineering Sciences} {\bfseries 461} no.~2058, (2005) 1627--1637}, \href{http://arxiv.org/abs/http://rspa.royalsocietypublishing.org/content/461/2058/1627.full.pdf}{{\ttfamily http://rspa.royalsocietypublishing.org/content/461/2058/1627.full.pdf}}.

\bibitem{deLeon:2005hu}
M.~de~Leon, J.~Marin-Solano, J.~C. Marrero, M.~C. Munoz-Lecanda, and N.~Roman-Roy, ``{Pre-multisymplectic constraint algorithm for field theories},'' \href{http://dx.doi.org/10.1142/S0219887805000880}{{\em Int. J. Geom. Meth. Mod. Phys.} {\bfseries 2} (2005) 839}, \href{http://arxiv.org/abs/math-ph/0506005}{{\ttfamily arXiv:math-ph/0506005}}.

\bibitem{Aldaya:1980zz}
V.~Aldaya and J.~A. de~Azcarraga, ``{Higher Order Hamiltonian Formalism in Field Theory},'' \href{http://dx.doi.org/10.1088/0305-4470/13/7/539}{{\em J. Phys. A} {\bfseries 13} (1980) 2545}.

\bibitem{Roman-Roy:2005vwe}
N.~Roman-Roy, ``{Multisymplectic Lagrangian and Hamiltonian formalism of first-order classical field theories},'' in {\em {1st Joint Congress of Mathematics RSME-SCM-SEIO-SEMA (MAT.ES 2005) (In Spanish)}}.
\newblock 6, 2005.
\newblock \href{http://arxiv.org/abs/math-ph/0506022}{{\ttfamily arXiv:math-ph/0506022}}.

\bibitem{HT-book}
M.~Henneaux and C.~Teitelboim, {\em Quantization of Gauge Systems}.
\newblock Princeton paperbacks. Princeton University Press, 1992.
\newblock \url{https://books.google.de/books?id=2FAuAKEKFyYC}.

\bibitem{Freedman:1980us}
D.~Z. Freedman and P.~K. Townsend, ``{Antisymmetric Tensor Gauge Theories and Nonlinear Sigma Models},'' \href{http://dx.doi.org/10.1016/0550-3213(81)90392-8}{{\em Nucl. Phys. B} {\bfseries 177} (1981) 282--296}.

\bibitem{Batlle:1988rd}
C.~Batlle and J.~Gomis, ``{Lagrangian and Hamiltonian {BRST} Structures of the Antisymmetric Tensor Gauge Theory},'' \href{http://dx.doi.org/10.1103/PhysRevD.38.1169}{{\em Phys. Rev. D} {\bfseries 38} (1988) 1169}.

\bibitem{Witten:2003nn}
E.~Witten, ``{Perturbative gauge theory as a string theory in twistor space},'' \href{http://dx.doi.org/10.1007/s00220-004-1187-3}{{\em Commun. Math. Phys.} {\bfseries 252} (2004) 189--258}, \href{http://arxiv.org/abs/hep-th/0312171}{{\ttfamily arXiv:hep-th/0312171}}.

\bibitem{Abou-Zeid:2005zfo}
M.~Abou-Zeid and C.~M. Hull, ``{A Chiral perturbation expansion for gravity},'' \href{http://dx.doi.org/10.1088/1126-6708/2006/02/057}{{\em JHEP} {\bfseries 02} (2006) 057}, \href{http://arxiv.org/abs/hep-th/0511189}{{\ttfamily arXiv:hep-th/0511189}}.

\bibitem{Krasnov:2016emc}
K.~Krasnov, ``{Self-Dual Gravity},'' \href{http://dx.doi.org/10.1088/1361-6382/aa65e5}{{\em Class. Quant. Grav.} {\bfseries 34} no.~9, (2017) 095001}, \href{http://arxiv.org/abs/1610.01457}{{\ttfamily arXiv:1610.01457 [hep-th]}}.

\bibitem{Sen:2019qit}
A.~Sen, ``{Self-dual forms: Action, Hamiltonian and Compactification},'' \href{http://dx.doi.org/10.1088/1751-8121/ab5423}{{\em J. Phys. A} {\bfseries 53} no.~8, (2020) 084002}, \href{http://arxiv.org/abs/1903.12196}{{\ttfamily arXiv:1903.12196 [hep-th]}}.

\bibitem{Ponomarev:2022qkx}
D.~Ponomarev, ``{Chiral higher-spin holography in flat space: the Flato-Fronsdal theorem and lower-point functions},'' \href{http://dx.doi.org/10.1007/JHEP01(2023)048}{{\em JHEP} {\bfseries 01} (2023) 048}, \href{http://arxiv.org/abs/2210.04036}{{\ttfamily arXiv:2210.04036 [hep-th]}}.

\bibitem{Herfray:2022prf}
Y.~Herfray, K.~Krasnov, and E.~Skvortsov, ``{Higher-spin self-dual Yang-Mills and gravity from the twistor space},'' \href{http://dx.doi.org/10.1007/JHEP01(2023)158}{{\em JHEP} {\bfseries 01} (2023) 158}, \href{http://arxiv.org/abs/2210.06209}{{\ttfamily arXiv:2210.06209 [hep-th]}}.

\bibitem{Basile:2022mif}
T.~Basile, S.~Dhasmana, and E.~Skvortsov, ``{Chiral approach to partially-massless fields},'' \href{http://dx.doi.org/10.1007/JHEP05(2023)136}{{\em JHEP} {\bfseries 05} (2023) 136}, \href{http://arxiv.org/abs/2212.06226}{{\ttfamily arXiv:2212.06226 [hep-th]}}.

\bibitem{Cattaneo:2023cnt}
A.~S. Cattaneo, L.~Menger, and M.~Schiavina, ``{Gravity with torsion as deformed $BF$ theory},'' \href{http://arxiv.org/abs/2310.01877}{{\ttfamily arXiv:2310.01877 [math-ph]}}.

\bibitem{Hull:2023dgp}
C.~M. Hull, ``{Covariant action for self-dual p-form gauge fields in general spacetimes},'' \href{http://dx.doi.org/10.1007/JHEP04(2024)011}{{\em JHEP} {\bfseries 04} (2024) 011}, \href{http://arxiv.org/abs/2307.04748}{{\ttfamily arXiv:2307.04748 [hep-th]}}.

\bibitem{Sharapov:2021drr}
A.~Sharapov and E.~Skvortsov, ``{Higher spin gravities and presymplectic AKSZ models},'' \href{http://dx.doi.org/10.1016/j.nuclphysb.2021.115551}{{\em Nucl. Phys. B} {\bfseries 972} (2021) 115551}, \href{http://arxiv.org/abs/2102.02253}{{\ttfamily arXiv:2102.02253 [hep-th]}}.

\bibitem{Samuel:1987td}
J.~Samuel, ``{A Lagrangian basis for Ashtekar's formulation of canonical gravity},'' \href{http://dx.doi.org/10.1007/BF02847105}{{\em Pramana} {\bfseries 28} (1987) L429--L432}.

\bibitem{Jacobson:1987yw}
T.~Jacobson and L.~Smolin, ``{The Left-Handed Spin Connection as a Variable for Canonical Gravity},'' \href{http://dx.doi.org/10.1016/0370-2693(87)91672-8}{{\em Phys. Lett. B} {\bfseries 196} (1987) 39--42}.

\bibitem{Capovilla:1991qb}
R.~Capovilla, T.~Jacobson, J.~Dell, and L.~J. Mason, ``{Selfdual two forms and gravity},'' \href{http://dx.doi.org/10.1088/0264-9381/8/1/009}{{\em Class. Quant. Grav.} {\bfseries 8} (1991) 41--57}.

\bibitem{Freidel:2012np}
L.~Freidel and S.~Speziale, ``{On the relations between gravity and BF theories},'' \href{http://dx.doi.org/10.3842/SIGMA.2012.032}{{\em SIGMA} {\bfseries 8} (2012) 032}, \href{http://arxiv.org/abs/1201.4247}{{\ttfamily arXiv:1201.4247 [gr-qc]}}.

\end{thebibliography}
\providecommand{\href}[2]{#2}\begingroup\raggedright\endgroup

\end{document}